\newcommand{\VA}{\mathsf{VA}}
\newcommand{\all}[1]{{[\![ #1 ]\!]}}
\newcommand{\CGS}{\texttt{vCGS}}
\newcommand{\CGSname}{\texttt{M}}
\renewcommand{\top}{\mathrm{tt}}
\renewcommand{\bot}{\mathrm{ff}}
\newcommand{\new}[1]{{\color{red}{#1}}}
\newcommand{\Ag}{{Ag}}
\newcommand{\tval}{\mathsf{tv}}
\newcommand{\naww}[1]{{\langle\!\langle #1 \rangle\!\rangle}}
\newcommand{\guard}{\mathsf{g}}
\def\sd{\longrightarrow}
\def\qed{\hfill{\qedboxempty}      % qed with empty box
	\ifdim\lastskip<\medskipamount \removelastskip\penalty55\medskip\fi}
\def\qedboxempty{\vbox{\hrule\hbox{\vrule\kern3pt
			\vbox{\kern3pt\kern3pt}\kern3pt\vrule}\hrule}}
\def\restr#1{\,\mbox{\rule[-4pt]{0.5pt}{13pt}}_{#1}}
\title{Model Checking Strategic Abilities in Information-sharing Systems}
\author{
	F.~Belardinelli, I.~Boureanu, C.~Dima, and V.~Malvone}
\institute{}
\begin{document}

\maketitle
%\vspace*{-0.5cm}
\begin{abstract}
%~Most concurrent-systems formalisms (such as multi-agent systems or concurrent game structures) are not always adept at 
% \emph{explicit} and \emph{natural} modelling of private data-sharing between processes or agents. And even when they are, it is generally undecidable to   model-check such data-sharing systems against  rich logics (such as strategy logics or ATL), under the assumption of  agents/processes having access to only imperfect information.
% %
% %
 
~We introduce a subclass of concurrent game structures (CGS) with imperfect information 
 in which agents are endowed with private data-sharing capabilities. Importantly, our CGSs are such that it is still decidable to model-check these CGSs against a relevant fragment of ATL.
These systems can be thought as a  generalisation of architectures allowing information forks, 
in the sense that, in the initial states of the system, we allow information forks from agents outside a given set $A$ to agents inside this  $A$. 
For this reason, together with the fact that the communication in our models underpins a specialised form of broadcast,  we call 
our formalism  \emph{$A$-cast systems}.

~To underline, the fragment of ATL for which we show the model-checking problem to be decidable  over $A$-cast is a large and significant one; it expresses coalitions over agents in any subset of the set $A$. Indeed, as we show,  our systems and this ATL fragments can encode security problems that are notoriously hard to express faithfully: terrorist-fraud attacks in identity schemes.

%Last and not least, we discuss how our decidability proof can be mechanised in a verification procedure.

% 	Most formalisms for multi-agent systems (MAS) are not adept at
%         the explicit modelling of data sharing between agents.  To
%         fill this gap, we consider \new{iCGS with visibility control} %the framework in \cite{BBDM19}}
%         for the explicit representation of private-data sharing in
%         MAS.  Unfortunately, in the MAS with private-data sharing, the model checking problem for
%         Alternating-time Temporal Logic with imperfect information and
%         perfect recall (ATL$_{iR}$) is undecidable in general.  Here,
%         we identify a significant class of MAS with
%         coalition-broadcast that allow for  decidable model checking
%         for  a fragment of ATL$_{iR}$.  In this class of MAS and over this logic, we also encode a timely
%         security-analysis problem, thus illustrating their practical
%         importance.
\end{abstract}

\section{Introduction}
% !TEX root = cav20.tex

\subsection*{The Increasing Importance of Private-Information Sharing \& Collusion-encoding.} 
%One the one hand, the s

Sharing information over private channels is a key issue in many ICT
  applications, from robotics to information
  systems~\cite{Samaila2017}. Safeguarding privacy is also
  of utmost concern, also in the context of EU's newly-enforced General
  Data Protection Regulation (GDPR)~\cite{eu:gdpr}.  So, being able to
  explicitly
  model data-sharing over private channels in
  multi-agent systems (MAS) or concurrent systems is crucial.

  On the other hand, there are numerous ICT applications, such as identity-schemes and
 distributed ledgers, where the threat of adversaries
 colluding \emph{privately} with inside accomplices is greater than
 that of classical, outsider-type attackers.  
 That said, verifying security against collusions-based attacks
such as terrorist-frauds~\cite{Bengio91} in identification schemes is
a notoriously difficult problem in security
verification~\cite{db}. And, even the most recent formalisms
attempting this~\cite{AD:2019} fall short of encapsulating the
strategic nature of this property or of the threat-model it views;
instead, \cite{AD:2019} looks at correspondence of events in applied
$\pi$-calculus (i.e., if agents did a series of actions with one
result, then later another fact must be the case).  
Instead, they use arguably crude approximations
 (based on causalities/correspondence of events in process
 calculi~\cite{Blanchet01}) of what a collusive attack  would
 be. But, indeed, such attacks resting on not one by two
 interdependent strategies are notoriously hard to
 model~\cite{db}. Meanwhile,
 %logic-based languages
 expressive logics for strategic reasoning have been
 shown effective in capturing intricate security requirements, such as
 coercion-resistance in
 e-voting~\cite{TabatabaeiJR16,BelardinelliCDJ17,Selene18}.
 Thus, the further study of strategic abilities under collusion is of
 timely interest.

Looking into this,  we observe that the typical formalisms for encoding
multi-agent systems (MAS), such as interpreted systems \cite{FHMV95},
support the analysis of strategy-oriented properties.   
   Yet, these systems are generally
inspired by concurrent action models and, so, explicit expression
therein of data-sharing between agents is difficult to model
naturally.  Other frameworks, such as
%event models in
dynamic epistemic logic (DEL)~\cite{hvdetal.del:2007,Maffre16} or
propositional assignments~\cite{CLPC2015}, provide agents with more
flexibility on data sharing, but do not allow for the full power of
strategic reasoning.
To fill this gap, the formalism of \emph{\CGS{}} was recently proposed in \cite{BBDM19}. \CGS{} supports the
 explicit expression of private-data sharing in MAS.
 That is,  a \CGS{} encodes syntactically and in a natural
 semantics a ``MAS with 1-to-1 private-channels'': agent $a$ and agent
 $b$ have an explicit syntactic/semantic endowment to ``see'' some of
 each others' variables, without other agents partaking in this. However, unfortunately, verifying strategy-based logic specification on \CGS{} has been shown undecidable. 
 So, it would interesting and timely, to see if we can maintain the needed feature of private-information sharing that \CGS{} have and recover the decidability of model checking logics that express  strategic abilities.
 
 \vspace*{0.3cm}
 
 To sum up, in this paper, \emph{we look at tractable verification of logics of strategic-abilities in systems capable of expressing private data-sharing between processes; we show an important decidability result in this space. We do this with the view of applying this to ICT systems where the analysis of strategic behaviour is crucial: e.g., collusion-based attacks in security settings.}

 %\vspace*{0.3cm}

\section{Context  \& Contribution}

\textbf{Context on Alternating-Time Temporal Logic.} Alternating-time temporal logic (ATL) \cite{AlurHenzingerKupferman02} is a powerful extension of 
branching-time logics which indeed provides a  framework for reasoning about strategic behaviours 
in concurrent and multi-agent systems.
From the very beginning, ATL was proposed also for agents with imperfect information.
%a feature which is important more widely in computer-science 
%as it can be regarded as a generalisation 
%of the synthesis problem and/or the search for winning strategies in multi-player
%zero-sum games. 
%
In ATL semantics, agents can have memoryless strategies or recall-based/memoryful strategies; 
in the latter, they ``remember'' all their history of moves, whereas in the former  -- they recall none. 
Also,  ATL with imperfect information comes with several semantic flavours~\cite{Jamroga-Hoek} 
(subjective, objective or common-knowledge interpretation), created by  
different views on the intricate relationships between its temporal and epistemic aspects.
%Also imperfect recall is much less expressive than perfect recall, for which 
%the model-checking problem is known to be undecidable \cite{AlurHenzingerKupferman02,DimaT11}.

\textbf{Context on ATL Decidability.}  There have been several proposals for decidable fragments of ATL with imperfect information,
which arose via two alternative  avenues: 
(a) by imposing a memoryless semantics for the ATL operators~\cite{LomuscioQuRaimondi15}, approach implemented in the MCMAS tool; 
(b) by making structural restrictions on the indistinguishability relations inside the game structures, 
so that one looks at ATL just with distributed knowledge \cite{DEG10}, or ATL with hierarchical knowledge \cite{BerthonMM17} 
or ATL over  broadcasting systems \cite{BLMR17,BelardinelliLMR17b}.
Some other decidable fragments can be further obtained by adapting  
decidable cases of the distributed-synthesis problem, or the 
the existence of winning strategies in multi-agent games with imperfect information.  
We therefore may generalise the decidability of the distributed-synthesis problem in architectures without information forks~\cite{FinkbeinerScheweLICS05},
or utilise the decidability of the problem of the existence of a winning strategy in 
multi-player games with finite knowledge gaps~\cite{berwanger-knowledge-gaps}. 

%Up to now, the formalisms underpinning the most general decidable cases  
%(i.e.,  ATL with hierarchical knowledge \cite{BerthonMM17} and ATL for broadcasting systems \cite{BLMR17,BelardinelliLMR17b}) lack the possibility to express information forks and/or encode
%finite knowledge gaps. 
%However the relevance of these decidable classes seems 

\subsection*{Contribution}

\textbf{I. Our Private Data-Sharing Systems.}
In this paper, we propose a new class of game structures is called \emph{$A$-cast game structures}, where $A$ is a set of processes/agents.
In $A$-cast game structures, when some outsider (agent not in $A$) sends some information to some agent in $A$, 
the same information is seen to all agents in the coalition $A$.

$A$-cast game structures are introduced using the formalism called {\CGS{}}  recently proposed\footnote{\CGS{} can be seen as a generalisation of Reactive Modules Games with imperfect information \cite{GutierrezPW16}, in that 
each agent may dynamically modify the visibility of the atoms she ``owns'', by either disclosing or hiding the value of that atom to some other agent.} in \cite{BBDM19}. 
We can introduce $A$-cast systems directly as game structures, yet we chose to use \CGS{} simply as it
allows for an elegant formalisation of the information-flow between outsiders to the coalition $A$. So, as a consequence of this presentation choice, $A$-cast game structures can be seen as a specialisation of the {\CGS{}} in \cite{BBDM19}.

$A$-cast game structures are in fact a strict generalisation of both architectures without information forks \cite{FinkbeinerScheweLICS05} 
and broadcast systems \cite{vanderMeyden2005}.
On the one hand, the extension of \cite{FinkbeinerScheweLICS05} comes from the fact that, in $A$-cast game structures,
the set of initial states is arbitrary, as well as the epistemic relations defining the information available to agents in initial states.
On the other, modelling these features using the setting of distributed architectures in \cite{FinkbeinerScheweLICS05}, where a single initial state is imposed, 
requires the Environment agent to send the initial information to each agent by some private channel, hence creates an information fork.

\textbf{II. Our Decidable Fragment of ATL.}
We now describe the class of formulas for which the  model-checking problem is  decidable on our  $A$-cast game structures, as well as some details linked to this. 
This ATL fragment  is composed of formulas which utilise only 
coalition operators involving a set of agents $B\subseteq A$, where $A$ is the set of agents/processes describing the $A$-cast system.

To obtain this result, the key point is that two action-histories starting in the same initial state, both being
generated by the same joint strategy for coalition $A$ and 
being in the same common-knowledge indistinguishability class for $A$ are in fact in the same distributed-knowledge indistinguishability class for $A$.
This property allows for the  design of a finitary information-set construction for the appropriate 
multi-player game with imperfect information \cite{berwanger-tracking},
needed to decide each coalition operator.

\textbf{III. Our Case Study and Its Need of Verification against Formulae with Nested-ATL Modalities.}
 We  provide a case study which shows the relevance of our new decidable class of ATL model-checking.
This case study lives in the cyber-security domain: the identity-schemes and  threats of collusion-based attacks therein (i.e., terrorist-fraud attack).
Concretely, we model the  distance-bounding (DB)  protocol by  Hancke and Kuhn~\cite{HK05}
 as an $A$-cast \CGS, and the existence of a terrorist-fraud attack on this protocol
 as an $ATL$ formula involving a non-singleton coalition operator. In fact, we also note that the $ATL$ formula which specifies the existence of a terrorist-fraud attack in our case study 
is the first type of formula requiring nesting of $ATL$ operators.

Hence, the model-checking algorithm proposed in this paper can be applied to this case study,
while other existing decidable model-checking or distributed-synthesis frameworks can treat it (due to the formula it requires). 
The only  exception would be the utilisation of a memoryless semantics of $ATL$ \cite{LomuscioQuRaimondi15},
which would be applicable since our case study is in fact a model in which  there exists a finite number of 
lasso-type infinite runs. Yet,  specifying our case-study in a formalism like that of~\cite{LomuscioQuRaimondi15}  would require an explicit encoding/``massaging'' of the agent memory
into the model. 
In other words, our algorithm synthesises memoryful strategies and hence
allows working with a given model without explicit encoding of agent memory into the model.

%Finally, we believe that $A$-cast \CGS\ is the first model to faithfully cater for
%the analysis of terrorist-frauds in the space of symbolic
%security-analysis~\cite{Blanchet2012}.

\vspace*{0.3cm}

{\bf Structure.}
In Section~\ref{background}, we present general
preliminaries on Alternating-time Temporal Logic ($ATL$) and
concurrent game structures with imperfect information
(iCGS) \cite{AlurHenzingerKupferman02}.  In Section~\ref{visAg}, {we
%put forward our 
recall MAS with private-channels, called \CGS\ \cite{BBDM19}.  
On \CGS{}, we provide ATL with our main formalism called \emph{$A$-cast systems}, that is a semantics
%of  in an interpretation
under the assumptions of imperfect information and perfect recall,
under subjective interpretation \cite{Jamroga-Hoek}.  In
Section~\ref{strcom}, we present the main decidability result of model checking ATL on top of $A$-cast systems.  
In Section~\ref{example}, we show how use the logic
and result here to check he existence of collusion-based attacks in
secure systems.  
%\ioana{In Section~\ref{x}, we discuss how our decidability result can be transformed into a verification procedure.}
Section \ref{sec:relwork} discusses related work and
future research.

%Due to space constraints, most proofs
%have been moved to a separate manuscript found at \cite{Appendix}.

%\input{Intro}

%\newpage

\section{Background on Alternating-time Temporal Logic}
\label{background}

We here recall background notions on concurrent game structures with
  imperfect information (iCGS) and Alternating-time Temporal Logic
  (ATL) \cite{AlurHenzingerKupferman02}.
%
%Hereafter,
We denote the length of a tuple $r$ as $|r|$, and its $i$th
element either as $r_i$ or $r[i]$.
%Let $last(r) = r_{|r|}$ be the last
%element in $r$.
For $i \leq |r|$, let $r_{\geq i}$ or $r[\geq i]$ be the suffix
$r_{i},\ldots, r_{|r|}$ of $r$ starting at $r_i$ and let $r_{\leq i}$
or $r[\leq i]$ be the prefix $r_{1},\ldots, r_{i}$ of $r$ ending at
$r_i$.  Further, we denote with $last(r)$ the last element $r_{|r|}$
of $r$.  Hereafter, we assume a finite set $Ag = \{1, \ldots, m\}$ of
{\em agents} and an infinite set $AP = \{p_1, p_2, \ldots \}$ of {\em
atomic propositions} (atoms).
\begin{definition}[iCGS]\label{icgs}
Given sets $Ag$ of agents and $AP$ of atoms, a {\em CGS with
  imperfect information} is a tuple $\CGSname
 = \langle  S,  S_0, \{
  Act_a \}_{a \in Ag}, \allowbreak  \{\sim_a \}_{a \in Ag}, P, \tau,  \pi
\rangle$ where:
\begin{itemize}
\item
%\smallskip
%$\bullet$ 
$S$ is the set of {\em states}, with $S_0 \subseteq S$ the set of {\em initial states}.

%\smallskip
\item
%$\bullet$ 
For every agent $a \in Ag$, $Act_a$ is the set of {\em actions}
  for $a$. Let $Act = \bigcup_{a \in \Ag} Act_a$ be the set of all
  actions, and $ACT = \prod_{a \in \Ag} Act_a$ the set of all {\em joint
  actions}.

%\smallskip

%\vad{before $\subset$}
%$\bullet$
\item
For every agent $a \in Ag$, the {\em indistinguishability
	relation} $\sim_a$ is an equivalence relation on $S$.
	\item 
%		  \cbstart
%\smallskip

%$\bullet$
		  For every $s \in S$ and $a \in Ag$, the {\em
          protocol function} $P: S \times Ag \to
          (2^{Act} \setminus \emptyset)$ returns the non-empty set
          $P(s,a) \subseteq Act_a$ of actions enabled at $s$ for $a$
          s.t.~$s \sim_a s'$ implies $P(s,a) = P(s',a)$.
%          that is, in indistinguishable states agents have the same
%          actions enabled.
%          \cbend

%\smallskip

	\item
%$\bullet$ 
$\tau : S \times ACT \to S$ is the (partial) {\em
          transition function} s.t.~$\tau(s, \vec{\alpha})$ is
          defined iff $\alpha_a \in P(s,a)$ for every $a \in Ag$.

%\smallskip

	\item
%$\bullet$ 
$\pi : S \to 2^{AP}$ is the {\em labelling function}.
\end{itemize}
\end{definition}

An iCGS describes the interactions of set $Ag$ of
agents.
%The executions of the multi-agent systems start in some
%initial state in $S_0$. Then, the MAS evolves according to the
%transition function $\tau$, depending on the actions made available to
%each agent by protocol $P$.
Every agent $a$ has {\em imperfect information} about the global state
  of the iCGS, as in every state $s$ she considers any state
  $s' \sim_a s$ as (epistemically) possible \cite{FHMV95,Jamroga-Hoek}.

To reason about the strategic abilities of agents in iCGS,
we adopt the Alternating-time Temporal Logic $ATL$.
%~\cite{AHK02}.
%

% \begin{definition}[$ATL$] \label{def:ATL*}
% 	State ($\varphi$) and path ($\psi$) formulas in $ATL$ are defined as
% 	follows, 	where $q \in AP$ and $A \subseteq \Ag$.
% 	\begin{eqnarray*}
% 		\varphi & ::= & q \mid \neg \varphi  \mid \varphi \land \varphi \mid \naww{A} \psi\\
% 		\psi & ::= & \varphi
%                 %\mid \neg \psi \mid \psi \land \psi
%                 \mid X \varphi \mid (\varphi U \varphi) 
% 	\end{eqnarray*}
%	Formulas in $ATL$ are all and only the state formulas.
% \end{definition}

\begin{definition} \label{def:ATL*}
	Formulas in $ATL$ are defined as
	follows, for $q \in AP$ and $A \subseteq \Ag$.
%	\begin{eqnarray*}
\[
		\varphi ::=  q \mid \neg \varphi  \mid \varphi \land \varphi \mid \naww{A} X \varphi \mid \naww{A} \varphi U \varphi \mid \naww{A} \varphi R \varphi 
\]
%		\psi & ::= & \varphi
%                %\mid \neg \psi \mid \psi \land \psi
%                \mid X \varphi \mid (\varphi U \varphi) 
%\mid (\psi R \psi)
%	\end{eqnarray*}
%	Formulas in $ATL$ are all and only the state formulas.
\end{definition}
The strategy operator $\naww{A}$ is read as ``coalition $A$
can achieve \ldots''. 
%The meaning of $LTL$
%operators `next' $X$ and `until' $U$
%and `release' $R$ 
%is standard.
%\cite{BaierKatoen08}.
%Formulas in the $ATL$ fragment of $ATL^*$ are obtained from
%Def.~\ref{def:ATL*} by restricting path formulas $\psi$ as follows,
%where $\varphi$ is a state formula:
%and $R$ is the `release' operator:
%
%\begin{eqnarray*}
%	\psi  ::=  X \varphi \mid (\varphi U \varphi) \mid (\varphi R \varphi)
%\end{eqnarray*}

The \emph{subjective} interpretation of $ATL$ \emph{with imperfect information and perfect recall} \cite{Jamroga-Hoek} is defined
%and $ATL^*$
on iCGS as follows.  Given an iCGS $\CGSname$, a {\em path} $p$ is a (finite or
infinite) sequence $s_1 s_2\dots$ of states such that for every
$i \geq 1$ there exists some joint action $\vec{\alpha} \in ACT$ such
that $s_{i+1} = \tau(s_i, \vec{\alpha})$. A finite, non-empty path
$h \in S^+$
%starting in an initial state
is called a {\em history}.
Hereafter, we extend the indistinguishability relation $\sim_a$
%on states in $S$
to histories:
%in $S_0 \cdot S^*$ in a synchronous and pointwise way:
$h \sim_a h'$ iff $|h| = |h'|$ and for every $i \leq
|h|$, $h_i \sim_a h'_i$.
\begin{definition}[Strategy] \label{uniformity}
	A {\em uniform, memoryful strategy } for agent $a \in \Ag$ is a
        function $f_a : S^+ \to Act_a$ such that for all histories $h, h' \in
        S^+$, (i) $f_a(h) \in P(last(h),a)$; and (ii) if $h \sim_a h'$ then $f_a(h) = f_a(h')$.
\end{definition}

Given a {\em joint strategy} $F_A = \{ f_a \mid a \in A \}$ for
coalition $A \subseteq Ag$, and history $h \in S^+$, let $out(h, F_A)$
be the set of all infinite paths $p$ whose initial segment is
indistinguishable from $h$ and consistent with $F_A$, that is,
$out(h, F_A) = \{ p \mid p_{\leq|h|} \sim_a h \text{ for some } a \in
A,
\text{ and for all } i \geq |h|, p_{i+1} = \tau(p_{i}, \vec{\alpha}),
\text{ where for all } a \in A, \alpha_a = f_a(p_{\leq i})\}$.

We now assign a meaning to $ATL$ formulas on iCGS.
\begin{definition}[Satisfaction] \label{satisfaction}
	The satisfaction relation $\models$ for an iCGS $\CGSname$, path $p$,
        index $i \in \mathbb{N}$, and $ATL$ formula $\phi$ is
        defined as follows (clauses for Boolean operators are immediate and thus omitted):
	
	{\small
	\begin{tabbing}
		$(\CGSname, p, i) \models q$ \ \ \ \ \ \ \ \ \ \ \ \ \ \ \ \ \ \ \  \=  iff   \ \= $q \in \pi(p_i)$\\
%		$(\CGSname, p,i) \models \neg \varphi$ \> iff \>  $(\CGSname, p,i) \not \models \varphi$\\
%		$(\CGSname, p,i) \models \varphi \land \varphi'$ \>iff\>  $(\CGSname, p,i) \models \varphi$  and  $(\CGSname, p,i) \models \varphi'$\\
	  $(\CGSname, p,i) \models \naww{{A}} X \varphi$  \>iff\> for some strategy $F_A$, for all $p' \in out(p_{\leq i}, F_A)$, $(\CGSname, p',i+1) \models \varphi$\\
	  $(\CGSname, p,i) \models \naww{{A}} \varphi_1 U \varphi_2$  \>iff\> for some strategy $F_A$, for some $p' \in out(p_{\leq i}, F_A)$,  \\
	  \>\>$\exists j\geq i$ s.t.~$(\CGSname, p',j) \models \varphi_2$ and $\forall i\leq k<j$, $(\CGSname, p',j) \models \varphi_1$\\
	  $(\CGSname, p,i) \models \naww{{A}} \varphi_1 R \varphi_2$  \>iff\> for some strategy $F_A$, for all paths $p' \in out(p_{\leq i}, F_A)$, \\
\> \> $\forall j\geq i, either (\CGSname, p',j) \models \varphi_2$, or $\exists i \leq k \leq j$ s.t.~$(\CGSname, p', k) \not \models \varphi_1$
%		$(\CGSname, p) \models \varphi $ \>iff\> $(\CGSname, p,1) \models \varphi$ \\
%		$(\CGSname, p) \models \neg \psi$ \> iff \>  $(\CGSname, p) \not \models \psi$\\
%		$(\CGSname, p) \models \psi \land \psi'$ \>iff\>  $(\CGSname, p) \models \psi$  and  $(\CGSname, p) \models \psi'$\\
%		$(\CGSname, p) \models X \psi$ \>iff\> $(\CGSname, p_{\geq 2}) \models \psi$\\
%		$(\CGSname, p) \models \psi U \psi'$  \>iff\> for some $k \geq 1$, $(\CGSname, p_{\geq k}) \models \psi'$, and 
                %\> \> 
%                for all $j < k$, $(\CGSname, p_{\geq j}) \models \psi$
%		$(\CGSname, p) \models \psi R \psi'$  \>iff\> for all $k \geq 1$, $(\CGSname, p_{\geq k}) \models \psi'$, or for \\ \> \> some $j \geq 1$,  $(\CGSname, p_{\geq j}) \models \psi$
	\end{tabbing}
}
%\rev{10134}{Definition 2.4: for the case of <<A>>psi, I believe the standard semantic should be $p' \in out(p_i, F_A)$ and not $p' \in out(p_{\leq i}, F_A)$. ATL is known to be forgetful about past at every strategy quantifier.}

%\vad{About the above comment: I think that Def.2.4 is correct but I added the above comment just to check you too.}
\end{definition}
Operators $\all{A}$,
%$\naww{A} R$ `release',
`eventually' $\naww{A} F$, and `globally' $\naww{A} G$ can be introduced as usual.

A formula $\varphi$ is {\em true at state $s$}, or $(\CGSname, s) \models
\varphi$, iff for all paths $p$ starting in $s$,
$(\CGSname, p,1) \models \varphi$.
A formula $\varphi$ is {\em true in an iCGS $\CGSname$}, or $\CGSname \models
\varphi$, iff for all initial states $s_0 \in S_0$, $(\CGSname, s_0) \models \varphi$.

% Notice that here we adopt the {\em subjective} interpretation of $ATL$
% (with perfect recall) \cite{Jamroga-Hoek}, whereby operator
% $\naww{{A}}$ is evaluated on all paths $p \in out(h, F_A)$,
% epistemically consistent with the current history $h$. This is in
% contrast with the {\em objective} interpretation, which only considers
% paths actually starting from the current history.

%epistemically consistent with $h$
%(i.e., $h' \sim_a h$ for all $a \in A$).

Our choice for the subjective interpretation of ATL is motivated by the fact that it
allows us to talk about the strategic abilities of agents as depending
on their knowledge, which is essential in the analysis of
security protocols.
We illustrate this point in Section~\ref{example}.

%and the objective interpretation is a
%standard account for $ATL^*$.
%Indeed, the subjective interpretation allows us to introduce an
%epistemic operator $K_a$ (for certain formulas) that represents the
%individual knowledge of agent $a$ as ``truth in indistinguishable
%histories'' \cite{FHMV95}. Specifically, define $K_a \varphi
%= \naww{a} \varphi \until \varphi$, where $\varphi$ is a state
%formula.
%
%\cbstart
%\rev{}{ Change to subjective semantics. Motivate and check results and proofs. Sketch log in example.}
%\cbend

Hereafter we tackle the following major decision problem.
%tackled in this work.
%
\begin{definition}[Model checking problem] %for ATL on \CGS] 
Given a iCGS $\CGSname$ and an $ATL$
%(resp.~$ATL$)
formula $\varphi$, the model checking problem amounts to determine
whether $\CGSname \models \varphi$.
\end{definition}
%\cbend

%\ioana{G(A) was introduced as notation?}

%The problem for $ATL$ requires $\varphi$ to be an $ATL$ formula.

\section{Agents with Visibility-Control} \label{visAg}

We now provide details on a notion of agent presented in  \cite{BBDM19}. 
Such an agents can change the truth-value of the atoms she controls, and make atoms visible
to other agents.
\begin{definition}[Visibility Atom] \label{vis_atom}
Given atom $v \in AP$ and agent $a \in Ag$,
%A \emph{visibility atom} is any atomic proposition written 
$vis(v,a)$ denotes a \emph{visibility atom} expressing intuitively
that the truth  value of $v$ is visible to $a$.
%where $v$ is an atomic proposition and $a$ is an agent.
By \emph{$\VA$} we denote the set of all visibility atoms
$vis(v, a)$, for $v \in AP$ and $a \in Ag$.  By \emph{$\VA_a$}=$\{
vis(v, a) \in \VA \mid v \in AP \}$ we denote the set of visibility
atoms for agent $a$.
\end{definition}

%\cbstart
Importantly, the notion of visibility in Def.~\ref{vis_atom} is
dynamic, rather than static, as it can change \emph{at run time}.
That is,
%at run time an
agent $a$,
%who ``controls''/``owns'' atom $v$,
can make atom $v$ visible (resp.~invisible) to agent $b$ by setting
the truth value of atom $vis(v,b)$ to true (resp., false).
%\cbend
% 
%We now show a notion of agent that controls the value of her
%atoms \cite{AlurH99b,HoekLomuscioWooldridge06}, as well as
%render them (in)visible to other agents.
%This syntactic definition is at the basis of our
%agent-systems in which atom-visibility can be fine-tuned.
%
This intuition motivates the following definition.
\begin{definition}[Visibility-Controlling Agent: Syntax]
\label{vis_agent}
%\emph{(from~\cite{BBDM19})}
  Given set $AP$ of atoms, an {\em agent}
  %$a$ (also called {\em agent specification} or, for short, {\em
%agent spec.})
is a tuple $a = \langle V_a, %\texttt{init}_a,  
\texttt{update}_a \rangle$ such that
\begin{itemize}
% \item  $AP$ is the set of atomic propositions (or atoms), which is the same for all agents; 
  \item
%
%\smallskip
%
%$\bullet$ 
$V_a \subseteq AP$ is the set of atoms controlled by agent $a$;
%  \item
%\smallskip

%$\bullet$ 
%$ \texttt{init}_a$ is a finite set of {\em initialisation commands}
%%which are expressions
%of the form:
%%{\small
%\begin{eqnarray*}
%	\gamma &  ::=  & v_{1} := \tval, \ldots, v_{k} := \tval, vis(v_{k+1}, a_1) := \tval \ldots, vis(v_{k+m}, a_m) := \tval
%%vis(v_1,a_2) := \phi_{2k+1}; \ldots; vis(v_k,a_n) := \phi_{(n+1)k}
%\end{eqnarray*}
%where each $v_i \in V_a$ is an atom controlled by $a$ that occurs at most
%once,
%%{\em guard} $\varphi$ is a Boolean formula over $AP \cup \VA_a$,
%all $a_1, \ldots, a_m$ are agents in $Ag$ different from $a$,
%%each $vis(v_i,a_j)$ is an atom, 
%and $\tval$ is a truth value in  $\{\top, \bot \}$.

\item 
%$\bullet$ 
$ \texttt{update}_a$ is a finite set of {\em update commands}
%which are expressions
of the form:
%{\small
\begin{eqnarray*}
	\gamma   ::=  &  \varphi \rightsquigarrow & v_{1} := \tval, \ldots, v_{k} := \tval, vis(v_{k+1}, a_1) := \tval \ldots, vis(v_{k+m}, a_m) := \tval
%vis(v_1,a_2) := \phi_{2k+1}; \ldots; vis(v_k,a_n) := \phi_{(n+1)k}
\end{eqnarray*}
%}
%
where each $v_i \in V_a$ is an atom controlled by $a$ that occurs at most
once, {\em guard} $\varphi$ is a Boolean formula over $AP \cup \VA_a$,
all $a_1, \ldots, a_m$ are agents in $Ag$ different from $a$,
%each $vis(v_i,a_j)$ is an atom, 
and $\tval$ is a truth value in  $\{\top, \bot \}$.
\end{itemize}
We denote with $\guard(\gamma)$ and $asg(\gamma)$
the guard (here, $\varphi$)  and assignment of $\gamma$, respectively.

% Moreover, guarded commands can be of two disjoint
%types: \emph{\texttt{init}} or \emph{\texttt{update}}.  The former is
%used for initialization, the atoms under the agent's control.  the
%latter for updates.  For the former the guard is always equal to
%$\top$ (and thus omitted), and

%Moreover,
%each $\phi_{k}$ is a boolean formula over $V_i$. 
%
%\item each $\gamma \in GC_a$ 
%\end{itemize}
\end{definition}

%We interchangeably write $a$ and $spec_a$ for the agent or agent
%specification $a$.  
%ioana: not sure we do anymore
%Also, w.l.o.g.~we assume that all agents share the same
%set $AP$ of atoms.
%is the set of atomic propositions (or atoms), which is the same for
%all agents;

%Let us explain the above definition.
%First, note that $v_i \in V_a$. That is,  
%Intuitively,
%the first bullet of
The intuitive reading of a guarded command $\varphi \rightsquigarrow asg$ is: if guard $\varphi$ is
true, then the agent executes
%one of the choices available to the module is to execute the
assignment $asg$.
By Def.~\ref{vis_agent} every agent $a$ controls the truth value of every
atom $v \in V_a \subseteq AP$.
%and she can change it
%through assignments $v :=\tval$.
% She manipulates these values, via guarded commands as
%given above.
%
%The second bullet in Def.~\ref{vis_agent}  allows the agent  $a$ not only to change the truth value of her atoms $v_i$,
%but   also to
Moreover, differently from \cite{AlurH99b,HoekLomuscioWooldridge06},
she can switch the visibility for some other agent $b$ over her own
atoms, by means of assignments $vis(v, b) := \tval$.
%for some $j$.
%
Lastly,
%note that
by requiring that all $a_1, \ldots, a_m$ are
different from $a$, we guarantee that agent $a$ cannot loose
visibility of her own atoms.
Hereafter, we assume that control is {\em exclusive}: for any two
distinct agents $a$ and $b$, $V_a \cap V_b = \emptyset$, i.e., the
sets of controlled atoms are disjoint.
%Since control is exclusive,
Then, we
often talk about the {\em owner} $own(v) \in Ag$ of an atom $v \in
AP$.

The agents in Def.~\ref{vis_agent} can be thought of as a compact,
program-like specifications of a multi-agent system.  The actual
transitions in the system are described by a specific class of concurrent game
structures with imperfect information,
%
%Specifically, the agents in Def.~\ref{vis_agent} induce a specific
%type of iCGS,
that we call {\em iCGS with visibility-control (\CGS)}.  In
  particular, a state in a \CGS\ is a set $s \subseteq AP \cup \VA$ of
  propositional and visibility atoms such that for every $a \in Ag$,
  $v \in V_a$, $vis(v, a) \in s$, that is, every agent can see the
  atoms she controls.  Then, given state $s$ and $a \in Ag$, we
  define \emph{$Vis(s,a) = \{ v \in AP \mid vis(v,a) \in s \}$} as the
  set of atoms visible to agent $a$ in $s$. Notice that, by definition,
  $V_a \subseteq Vis(s,a)$ for every state $s$.

%These CGSs follow an imperfect-information semantics.
%
%To introduce them, we assume a standard notion $\models$ of satisfaction for purely boolean formulas. 
%In the following, given a formula $\varphi$, with $atoms(\varphi)$ we define the set of atoms appearing in $\varphi$.
%

%\ioana {``Visibility-Controlling Agents'' ? we need a better name}

\begin{definition}[{Visibility-Controlling Agents: Semantics}] \label{vis_agent_sem}
%\emph{(from~\cite{BBDM19})}
Given a set $Ag$ of agents as in Def.~\ref{vis_agent} over a set $AP$ of atoms, and a set $S_0$ of initial states,
%$S_0 \subseteq \{s \subseteq AP \cup \VA\}$ of initial states, 
an \emph{iCGS with visibility-control (\CGS)} is an iCGS $\CGSname
  = \langle S, S_0, \{Act_a \}_{a \in Ag}, \{\sim_a \}_{a \in Ag},
  P, \tau, \pi \rangle$ where:
\begin{itemize}
%\smallskip
	\item $S$  and  $S_0$ are the set of all states and the set of  initial states, respectively;
%$\bullet$ 
%$S \supseteq S_0 = \{s \subseteq AP \cup \VA \mid \text{ for every } a
%          \in Ag, v \in V_a, vis(v, a) \in s \}$ is the \emph{set of
%            states}.  For $s \in S$ and $a \in Ag$, we define
%          \emph{$Vis(s,a) = \{ v \in AP \mid vis(v,a) \in s \}$} as
%          the set of atoms visible to agent $a$ in state $s$. By definition, $V_a \subseteq Vis(s,a)$ for every $s
%          \in S$.

	\item
%$\bullet$ 
For every agent $a \in Ag$, $Act_a = \texttt{update}_a$.

%\smallskip

      \item
%\hspace{0.8em} $\bullet$
%Let the set $R \subseteq S$ of {\em reachable states} be the
%        transitive closure of $S_0$ under the transition function
%        $\tau$.
The {\em indistinguishability relation} is defined so
        that for every $s, s' \in S$, $s \sim_a s'$ iff $Vis(s,a) =
        Vis(s',a)$ and for every $v \in Vis(s,a) = Vis(s',a)$, $v \in
        s$ iff $v \in s'$.
%        whereas for states in $S\setminus R$, each
%        $\sim_a$ is the identity relation.
%

          %\smallskip

%	\item 
%%		  \cbstart
%%$\bullet$
%		  $S_0 \subseteq S$
%                  %denotes the set of \emph{initial  states}, and
%                  is the set of states $s_0 \in S$
%                  %iff for all $v\in AP$,
%                  %  we have $v \in s_0$ iff
%                  such that for every $a \in Ag$, for some $\gamma \in
%          \texttt{init}(Act_a)$, for every $v \in V_a$,    
%%$v\in V_a$ and 
%          if $v := \top$ occurs in $asg(\gamma)$, then $v \in s_0$; and
%          if $v := \bot$ occurs in $asg(\gamma)$, then $v \notin s_0$.
%          %Similarly, $vis(v,b)
%          %\in s_0$ iff 
%%for the agent $b\in \Ag$ with $v \in V_b$
%          %there exists $\gamma \in
%          %\texttt{init}(Act_{own(v)})$ with $vis(v,b) := \top$
%          %occurring in $asg(\gamma)$.
%          That is, atoms are initialised
%          as true or false only if we have a corresponding
%          \texttt{init} command.
%{\bf FB: add the following} Moreover, the assignment contains
%$vis(v, a) ::= \top$ for every atom $v \in V_a$ (i.e., an agent always
%has visibility of the atoms she controls).

%such that $vis(v,a) \in \gamma_b$.
%		  \cbend

	            %\smallskip

          	\item
          %$\bullet$ 
          For every $s \in S$ and $a \in Ag$, the
          {\em protocol function} $P: S \times Ag \to 2^{Act}$ returns set $P(s,a)$ of \texttt{update}-commands
%$\{ac_1, \ldots, ac_k
%          \}$ 
%such that for every 
$\gamma \in Act_a$ s.t.~$AP(\guard(\gamma)) \subseteq Vis(s,a)$ and $s
          \models \guard(\gamma)$, where $AP(\phi)$ is the set of
          atoms occurring in $\phi$.
That is, all atoms appearing
in the guard are visible to the agent
%supposed to perform the action
and the guard is true.
% for all $1 \leq j \leq k$ and $1 \leq i \leq n$.
We can enforce $P(s,a) \neq \emptyset$ for all $s \in S$ and $a \in
Ag$, by introducing a {\em null} command
$\top \rightsquigarrow \epsilon$, with trivial guard and empty
assignment.
        We can check that if $s \sim_a s'$ then $P(s,a) =
        P(s',a)$.

          %\smallskip

\item
%$\bullet$ 
The {\em transition function} $\tau : S  \times  ACT  \to  S$ 
is s.t. $\tau(s, (\gamma_1, \ldots, \allowbreak \gamma_m))  =  s'$ holds iff:
	\begin{itemize}
		\item For every $a \in Ag$, $\gamma_a \in P(s,a)$.
		\item For every $v\in AP$ and $own(v) \in \Ag$,
%with $v\in V_a$, 
$v \in s'$ iff either $asg(\gamma_{own(v)})$ contains an assignment $v := \top$ or, 
		$v \in s$ and $asg(\gamma_{own(v)})$ does not contain an assignment $v := \bot$.
%		\item Similarly, 
%whereas $v \not \in s'$ iff either $asg(\gamma_{own(v)})$ contains an
%assignment $v := \bot$ or $v \not \in s$.

		\item For every $v\in AP$ and $own(v) \in \Ag$,
% with $v\in V_a$, 
$vis(v,a) \in s'$ iff either $asg(\gamma_{own(v)})$ contains an assignment $vis(v,a)  :=  \top$ or, 
		$vis(v,a) \in s$ and $asg(\gamma_{own(v)})$ does not contain an assignment $vis(v,a)  :=  \bot$.
\end{itemize}

%\smallskip

         \item
%$\bullet$
The labelling function $\pi : S \to 2^{AP}$ is the identity function restricted
to $AP$, that is, every state is labelled with the propositional atoms
belonging to it.
%    i.e., each state is labelled with the atoms belonging to
%    it.
%	\begin{eqnarray*}
%	\varphi \rightsquigarrow vis(v,a):=\top
%	\end{eqnarray*}

\end{itemize}
\end{definition}

%\cbstart

%\textcolor{blue}{FB: add discussion on the definition of strategy. (for Catalin to answer)}

%By Def.~\ref{vis_agent_sem}, the states in a \CGS\  are sets of atoms and
%visibility atoms manipulated by agents. 
%Initial states are generated
%by \texttt{init}-commands, while enabled \texttt{update}-commands
%induce the transition function.
%Notice that only reachable states are
%considered epistemically possible by agents.
%
%We say that an agent $a \in Ag$ has \emph{perfect information in state
%  $s \in S$} iff $Vis(s, a) = AP$.  

%\subsection*{Model Checking $ATL^*$ on \CGS.}

%\new{\underline{Catalin's suggestion}: 

Definitions \ref{vis_agent} and \ref{vis_agent_sem} implicitly state
that agents in a \CGS\ interact by means of strategies, defined, as
for iCGS, as suitable functions $f_a : S^+ \rightarrow Act_a$.  This
means that the set of guarded commands in an agent specification
describe all possible interactions, from which agents choose a
particular one. The chosen command can be seen as a refinement of the
corresponding guarded command, in the form of a (possibly infinite)
set of guarded commands $(\gamma_h)_{h\in S^+}$ where each $\gamma_h$
is obtained from the unique command $\gamma = f_a(h)$ by appending to
the guard $\guard(\gamma)$ of $\gamma$ some additional formula
uniquely identifying the observations of history $h$ by agent $a$.

Clearly, $\CGS$ are a subclass of iCGS, but still expressive enough to
capture the whole of iCGS.
%as reported by the following result.
%The following result states a relationship in the other
%direction:
%
%Their relationship is even closer, as stated by the next result:
%\begin{theorem} \label{th1}~\cite{BBDM19}
Indeed, \cite{BBDM19b} provides a polynomial-time reduction of the
model checking problem for $ATL$ from
%The model checking problem for $ATL$
%(resp.~$ATL$)
 iCGS
 %is PTIME-reducible to the same problem on
 to \CGS.
 Intuitively, all components of an iCGS can be
encoded by using propositional and visibility atoms only.
In particular, since model checking $ATL$ under perfect information
and perfect recall is undecidable on iCGS \cite{DimaT11}, it follows
that this is also the case for
\CGS.
This also means that, in general, the "implementation" of a strategy
accomplishing some ATL objective along the above idea of "refining"
the guarded commands may yield an infinite set of guarded commands
$(\gamma_h)_{h\in S^+}$.  The possibility to obtain a finite, refined
set is therefore intimately related with the identification of a
subclass of
\CGS\ in which the model checking problem for ATL is decidable. We address this problem in the next section
and provide a relevant case study for the relevant subclass in Section~\ref{example}.

\section{$\CGS$ with Coalition Broadcast} \label{strcom}
\label{decproof}
% !TEX root = cav20.tex 

Given the undecidability results
 in~\cite{BBDM19}, it is of interest to find restrictions on the class
 of \CGS\ or language $ATL$ that allow for a decidable model checking
 problem.
Hereafter, we introduce one such restriction that
is also relevant for the security scenario presented in
Section~\ref{example}; thus allowing us to verify it against
collusion.
In that follows we fix a coalition $A \subseteq Ag$ of agents.

\begin{definition}[$A$-cast \CGS] \label{restr}
A \CGS\ {\em with broadcast towards coalition $A$} (or {\em $A$-cast})
satisfies the following conditions for every $a, b \in A$, $c \in Ag\setminus A$, $v \in V_c$ and
$\gamma_c \in Act_c$:
%\noindent
%\begin{enumerate}%{lp{7.5cm}}
%\item  for all $s_0 \in S_0$, for all $b, c \in A \setminus \{ a \}$, $vis(v, b) \in s_0$ iff $vis(v, c) \in s_0$.
%\item   

\begin{enumerate} 
\item [{\rm (}Cmp{\rm)}]
%For every $a \in A$,
$\gamma_c$ contains an assignment of the type $vis(v,a) ::= t$ for some $t \in \{\top,\bot\}$.
\item [\ensuremath{(\dagger^{A})}]
%For all $a, b \in A$ and
For $t \in \{\top, \bot \}$,
$asg(\gamma_c)$ contains  $vis(v, a)::= t$  iff  it contains $vis(v, b)::= t$ also.
\end{enumerate}

%for some $b  \in Ag \setminus \{a\}$,
%then \allowbreak $vis(v, c)::=\top$ is in $asg(\gamma_a)$ for every $c
%\in  A \setminus \{a\}$.
%& 
%Further, if $vis(v, c)::=
%\bot$ appears in $asg(\gamma_a)$ for some $c \in A \setminus \{a\}$,
%then also $vis(v, b)::= \bot$ is in $asg(\gamma_a)$ for every $b
%\in Ag \setminus \{a\}$.
%\end{enumerate}
\end{definition}

%\catalin{Note that condition 1 has disappeared, it was needed nowhere
%in the proof! Still need to rewrite most of the remarks before Lemma
%1, since this definition is now more general}

\begin{remark}
Condition ({\em Cmp}) ensures that each command for an agent
$c$ outside the coalition determines explicitely the visibility of each
variable in $V_c$, which cannot just be copied via some transition.  
Hence, both conditions ({\em Cmp}) and \ensuremath{(\dagger^{A})}, ensure
that, in all states along a run \emph{excepting the initial state} all
agents inside coalition $A$ \emph{observe the same atoms owned by
adversaries in $Ag \setminus A$}. This statement, formalized in Lemma 1 below,
%The restriction \ensuremath{(\dagger^{A})}
%applies only to agents outside the coalition: when an agent outside the coalition 
%wants to reveal the value of some variable to someone inside the coalition, then that variable will be seen by anyone inside the coalition,
%and similarly for hiding that variable.
is a generalization of the ``absence of information forks''
from \cite{FinkbeinerScheweLICS05}, which disallows
private one-way communication channels from agents outside the
coalition to agents inside the coalition, in order to avoid
undecidability of their synthesis problem.

The restriction in ({\em Cmp}) however does not apply at initial
states: the indistinguishability relation is arbitrary on initial
states.  This makes systems satisfying Def.~\ref{restr} strictly more
general than systems not having ``information forks''
in \cite{FinkbeinerScheweLICS05}.  To see that, we note that the
notion of strategy in \cite{FinkbeinerScheweLICS05} amounts to the
following: a strategy for agent $a$ is a function $f_a : S^* \to
Act_a$ such that for all histories $h, h' \in S^+$, (i) $f_a(h) \in
P(last(h),a)$; and (ii) if $h \sim_a h'$ then $f_a(h) = f_a(h')$.  The
difference between this definition and ours (where $f_a$ is a function
from $S^+$ to $Act_a$) implies that \emph{there is a unique initial
state}.  This is not the case in our setting.

Indeed, CGS with multiple initial states and an arbitrary
indistinguishability relation on these may be simulated by CGS with a
unique initial state by allowing the \emph{environment}
(or \emph{adversaries}) to do an initial transition to any of the
initial states.  But this additional initial transition must set the
indistinguishability of the resulting state, which means that if we
had two initial states $q_0^1,q_0^2$ in the original CGS with
$q_0^1 \sim_a q_0^2$ but $q_0^1\not\sim_b q_0^2$, this creates an
information fork in the sense of \cite{FinkbeinerScheweLICS05}.  So
our setting allows "information forks only at initial states",
contrary to \cite{FinkbeinerScheweLICS05}.

Note that Def.~\ref{restr} still allows an agent $a\in A$ to have a "private
one-way communication channel" to some agent $b\in Ag$, in the sense
that, for some $v \in V_a$ and reachable state $s \in S$, $v \in
Vis(s,b)$ but $v \not\in Vis(s,c)$ for any other $c \in Ag$, so agent
$b$ sees the value of $v$ which any other agent does not.
\end{remark}

\begin{remark}
%Note further that Def.~ \ref{restr} does
%not disallow the existence of two guarded commands
%$\gamma_a^1, \gamma_a^2 \in Act_a$ that have the same effect on the
%atoms visible by all agents, but may update the atoms in $H_a$ in two
%different ways.  That is, agents $b\neq a$ might consider $\gamma_a^1$
%and $\gamma_a^2$ indistinguishable, in the sense of DEL models.
%This makes our approach more general than  \cite{BLMR17}. 
\CGS\ constrainted by Def. \ref{restr} are more general than "broadcast" game structures from \cite{BLMR17} or \cite{vanderMeyden2005}.
To see this, note that the setting from \cite{BLMR17} amounts to the
following restriction: for any state $s$ and distinct \emph{joint
actions} $\gamma_1$, $\gamma_2 \in ACT$, if $t_1 = \tau(q,\gamma_1)$
and $t_2 = \tau(q,\gamma_2)$ then $t_1 \not \sim_b t_2$ for any $b \in
Ag$. But this restriction disallows actions by an agent $a \in A$
which update some variable $v \in V_a$ which is invisible to any other
agent, a fact which is permitted by Def. \ref{restr}.
Also our setting is more general than the ``broadcast'' systems
in \cite{vanderMeyden2005}.  In such broadcast systems,
condition \ensuremath{(\dagger^{A})} is to be satisfied by all agents,
both inside and outside the coalition, which is not the case here.
\end{remark}

In what follows,
the notation $\sim_A^E = \bigcup_{a \in A} \sim_a$ stands for the
\emph{group knowledge relation} for coalition $A$, 
$\sim_A^C = \big(\sim_A^E\big)^*$ for the 
{\em common knowledge} for coalition $A$, whereas $\sim_A^D =
\bigcap_{a \in A} \sim_a$ for the {\em distributed knowledge} for $A$ \cite{FHMV95}. Both
notations are overloaded over states and histories: let $\sim$ be any
of $\sim_a$, $\sim_A^C$, or $\sim_A^D$, for $a \in Ag$ and $A
\subseteq Ag$, then for all histories $h$, $h'$, we set $h \sim h'$
iff (i) $|h| = |h'|$, and (ii) for every $i \leq |h|$, $h_i \sim
h'_i$.

The following lemma is essential in proving the decidability result in
Theorem \ref{thm-A-cast}.
Intuitively, it states that histories that are related through the
common knowledge of coalition $A$, are compatible with the same
strategy, and \emph{share the same initial state}, are in fact in the
distributed knowledge of that coalition.  
This can be seen as a generalization of Lemma 2 from \cite{vanderMeyden2005}.
%Hence, the corresponding
%strategy ``behaves like" a strategy based on distributed
%knowledge \cite{DEG10}, although \emph{on such histories only}.

\begin{lemma}%[1] 
	\label{lemma-two-histories} Let $\CGSname$ be an
        $A$-cast \CGS.  Assume a joint strategy $\sigma_A$ on
        $\CGSname$.  Consider histories $h_1,h_2$ such that
        $h_1 \sim_A^C h_2$,
	$h_1[1] = h_2[1]$, and  $h_i \in out(h_i[1],\sigma_A)$ for both
	$i \in \{1,2\}$.
	Then $h_1 \sim_A^D h_2$ and in particular
	$\sigma_a(h_1) = \sigma_a(h_2)$ for all $a \in A$.
\end{lemma}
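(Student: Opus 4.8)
The statement is an induction on the common-knowledge distance between $h_1$ and $h_2$, i.e. on the length of the $\sim_A^E$-chain witnessing $h_1 \sim_A^C h_2$. Since $\sim_A^D$ is transitive and implies $\sim_A^C$, it suffices to prove the claim when $h_1 \sim_a h_2$ for a single agent $a \in A$; the general case follows by composing the conclusions $h_1 \sim_A^D h_1' \sim_A^D \cdots \sim_A^D h_2$ along the chain, provided each intermediate history is also consistent with $\sigma_A$ from the common initial state --- which it is, since each $h_j'$ lies on a path in $out(h_1[1],\sigma_A)$. So I would first reduce to the base case: $h_1 \sim_a h_2$ for some $a \in A$, both starting in $s_0 := h_1[1] = h_2[1]$, both generated by $\sigma_A$, and the goal is $h_1 \sim_b h_2$ for every $b \in A$ (and hence the equality of $\sigma_b$-values at the last states, by uniformity of $\sigma_b$).

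\textbf{Main induction (on history length).} I would then induct on $n := |h_1| = |h_2|$. For $n = 1$ there is nothing to prove: $h_1 = h_2 = s_0$ trivially. For the step, write $h_1 = g_1 \cdot s_1$ and $h_2 = g_2 \cdot s_2$ with $|g_1| = |g_2| = n$. From $h_1 \sim_a h_2$ we get $g_1 \sim_a g_2$, and $g_i \in out(g_i[1],\sigma_A)$ with $g_1[1] = g_2[1] = s_0$; by the induction hypothesis $g_1 \sim_A^D g_2$, so in particular $g_1 \sim_b g_2$ for every $b \in A$ and $\sigma_b(g_1) = \sigma_b(g_2)$ for every $b \in A$. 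Now I need to lift this one step to conclude $s_1 \sim_b s_2$ for every $b \in A$. Fix $b \in A$. The state $s_i$ is $\tau(\mathit{last}(g_i), \vec\gamma^{\,i})$ where, for agents in $A$, $\gamma^i_c = \sigma_c(g_i)$ --- and these agree for $i=1,2$ by the IH --- while for agents $c \in Ag \setminus A$ the actions $\gamma^1_c, \gamma^2_c$ may differ arbitrarily. By the \CGS\ semantics, whether $v \in s_i$ and whether $vis(v,b) \in s_i$ is determined by $asg(\gamma^i_{own(v)})$ together with the corresponding fact at $\mathit{last}(g_i)$. I would split on $own(v)$: if $own(v) \in A$ or $own(v) = b$, the controlling command is the same for $i=1,2$ (by IH plus uniformity), and the incoming facts at $\mathit{last}(g_1), \mathit{last}(g_2)$ are $\sim_b$-equivalent, so the outgoing facts agree on $b$'s view; if $own(v) = c \in Ag \setminus A$, I use the $A$-cast conditions: by (Cmp) the command $\gamma^i_c$ explicitly sets $vis(v,b)$ to some $t^i \in \{\top,\bot\}$, and by $(\dagger^A)$ it sets $vis(v,a)$ to the \emph{same} $t^i$. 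But $s_1 \sim_a s_2$ (this is exactly $h_1 \sim_a h_2$ restricted to the last coordinate), so $v$'s visibility to $a$ agrees in $s_1$ and $s_2$, forcing $t^1 = t^2$; hence $vis(v,b)$ agrees in $s_1$ and $s_2$, and when this common value is $\top$, the fact that $v \in s_i$ is read off $asg(\gamma^i_c)$ which, again via $\sim_a$ at $s_1,s_2$, must agree. This gives $s_1 \sim_b s_2$ for all $b \in A$, i.e. $h_1 \sim_b h_2$ for all $b \in A$, which is $h_1 \sim_A^D h_2$; the consequence $\sigma_b(h_1) = \sigma_b(h_2)$ is then immediate from uniformity of $\sigma_b$.

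\textbf{Where the difficulty lies.} The routine part is the bookkeeping over the transition function; the genuine content is the one-step visibility-lifting argument for adversary-owned atoms, and the key leverage is that $(\dagger^A)$ ties agent $b$'s view of $v$ to agent $a$'s view of $v$, while the hypothesis gives us control precisely over $a$'s view (that is the single agent along which $h_1$ and $h_2$ are assumed related). The subtlety to get right is that $(\dagger^A)$ only synchronises the visibility assignments, not the value assignments $v := \tval$ themselves --- so I must argue separately that once $vis(v,b)$ is set to $\top$ in both $s_1$ and $s_2$, the value of $v$ agrees there too, which again follows because $vis(v,a)$ is then also $\top$ (by $(\dagger^A)$) and $s_1 \sim_a s_2$. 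A second point requiring care is the outer reduction from $\sim_A^C$ to the single-agent case: I must check that every history appearing in the connecting $\sim_A^E$-chain can be taken to start in $s_0$ and to be $\sigma_A$-consistent, so that the single-step lemma applies at each link and transitivity of $\sim_A^D$ closes the argument; this is why the hypothesis insists on a \emph{common} initial state and on membership in $out(\cdot,\sigma_A)$.
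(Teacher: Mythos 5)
Your one-step lifting argument is essentially the paper's core computation and is sound: for coalition-owned atoms you use the induction hypothesis $g_1 \sim_A^D g_2$ plus uniformity and determinism of the updates, and for adversary-owned atoms you use ({\em Cmp}) and $(\dagger^{A})$ to tie $b$'s visibility of $v$ to $a$'s visibility, then exploit the $\sim_a$ link between the last states to transfer both the visibility bit and the value. (Minor imprecision: the value of a newly-revealed atom may be carried over rather than read off the assignment, and for atoms owned by $d\in A$ you need $\sim_d$-agreement from the distributed-knowledge IH, not just $\sim_b$-agreement; both points are covered by what you already have.)

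The genuine gap is the outer reduction from $\sim_A^C$ to the single-agent case. You assert that every intermediate history $h'_j$ in a witnessing $\sim_A^E$-chain starts in $h_1[1]$ and lies in $out(h_1[1],\sigma_A)$, but nothing supports this: $h \sim_a h'$ constrains only what agent $a$ observes, not the actions that generate $h'$ (another coalition member may deviate from $\sigma_A$ invisibly to $a$) nor the initial state of $h'$ (only $h'[1] \sim_a h[1]$ is forced); the hypothesis $h_i \in out(h_i[1],\sigma_A)$ applies to the two endpoints only. Moreover, the paper defines $\sim_A^C$ on histories componentwise from the state-level common-knowledge relation, so the witnessing chain consists of states, not of histories at all. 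Since your single-link lemma needs \emph{both} of its histories to be $\sigma_A$-consistent with a common initial state, you cannot apply it link by link and close with transitivity of $\sim_A^D$ --- and the lemma cannot be that cheap, since (as the paper's remark stresses) $\sim_A^C$ does not imply $\sim_A^D$ in general. The paper escapes this by splitting the statement: the strategy-dependent facts (values of atoms in $\bigcup_{a\in A}V_a$, equality of the prescribed actions) are proved directly for the endpoint pair $h_1,h_2$ by induction on the prefix length, using only the IH that the shorter prefixes are $\sim_A^D$-related together with uniformity and determinism; the chain is used only for the strategy-independent facts (common visibility and values of adversary-owned atoms), which propagate along any $\sim_A^E$-path of non-initial elements by ({\em Cmp}) and $(\dagger^{A})$ alone, with no consistency requirement on the intermediates. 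Restructuring your proof along that separation would repair the reduction.
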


%\vad{Rev1: I found the final paragraph of the proof of Lemma 1 hard to follow.  It would help if you noted that you are looking at a case where an agent outside the coalition changes the visibility of some variable wrt. one member of the coalition and not to another.  Where you say "On the other hand, as noted I the beginning of this proof, vis(p, c) == tt for all $c \in A$ - you should specify that this is for all of $h_2$}
%\catalin{The proof is completely rewritten}

\begin{proof}
We will actually prove the following properties by induction on $k$:
\begin{enumerate}
\item For every $a \in A$ and $v\in V_a$, $v \in h_1[k]$ iff $v \in h_2[k]$.

\item For every
%$c\not\in A$, and
$a,b\in A$, 
$Vis(h_1[k],a) = Vis(h_1[k],b) =  Vis(h_2[k],a)  =  Vis(h_2[k],b)$.
\item For each $c \not \in  A$, $a \in A$ and $v \in V_c  \cap Vis(h_1[k],a)$,
$v \in h_1[k] \text{ iff } v \in h_2[k]$.
\end{enumerate}
Note that all these properties imply that $h_1[\leq  k] \sim_A^D h_2[\leq  k]$.

The base case $k=0$ is trivial by the assumption that $h_1[1]=h_2[1]$,
so let's assume the two properties hold for some $k$. Since, as noted,
this implies that $h_1[\leq  k] \sim_A^D h_2[\leq  k]$, uniformity of
$\sigma_A$ implies that, $\sigma_A(h_1[k]) = \sigma_A(h_2[k])$.  But
then the effect of this unique tuple of actions on each atom
$v \in \bigcup_{a\in A} V_a$ is the same (actions are deterministic),
which ensures property (1).
 
%= \gamma\in \prod_{a\in A} Act_a$, and, as such,
%the effect of each action $\gamma_a$ prescribed by the strategy is identical on both states $h_1[k+1]$ and $h_2[k+1]$, which ensures property 1.

To prove property (2), assume there exist two joint actions
$\gamma_{\overline{A}}^1, \gamma_{\overline{A}}^2 \in \prod_{c \not\in A} Act_c$
such that for each $i=1,2$, $h_i[k] \xrightarrow{\sigma_A(h_i[\leq
k]),\gamma_{\overline{A}}^i} h_i[k + 1]$.  Fix some $c \not \in A$ and denote
$\gamma_c^i = (\gamma_{\overline{A}}^i)_c$ for both $i=1,2$.  Also take some
$v \in V_c$ and two agents $a,b\in A$.  We will give the proof for
$h_1[\leq k + 1]  \sim_A^E  h_2[\leq  k+ 1]$, the general case following by
induction on the length of the path of indistinguishabilities
relating $h_1[\leq k + 1]  \sim_A^C  h_2[\leq k + 1]$.  And, for
this, we assume, without loss of generality, that $h_1[\leq k + 1]  \sim_a  h_2[\leq k + 1]$.

Assume now that $v  \in  Vis(h_1[k + 1],a)$.
% and $v \not\in Vis(h_1[k],a)$.
The first assumption, together with $h_1[\leq  k + 1]  \sim_a  h_2[\leq  k + 1]$, implies that 
$v  \in  Vis(h_2[k + 1],a)$.
%On the other hand, by condition 1, we must have that $v \not \in Vis(h_1[k],b)\cap Vis(h_2[k],a)\cap Vis(h_2[k],b)$.
So, by condition ({\em Cmp}) in Def. \ref{restr}, for each $i=1,2$, $\gamma_c^i$ contains assignment $vis(v,a) := true$, 
and, by condition \ensuremath{(\dagger^{A})}, both $\gamma_c^i$ contain $vis(v,b) := true$.
But, in this way, we get that $v  \in  Vis(h_1[k + 1],b)  \cap  Vis(h_2[k + 1],b)$.
So $Vis(h_1[k + 1],a)  \subseteq  Vis(h_1[k + 1],b)  \cap  Vis(h_2[k + 1],b)$.
A similar argument holds if we start with $v  \in  Vis(h_1[k + 1],b)$ or $v  \not \in  Vis(h_1[k + 1],a)$
which proves that condition (2) holds.

For proving property (3), assume $v  \in  V_c  \cap  Vis(h_1[k + 1],a)$.  By
property (2) just proved, $Vis(h_1[k + 1],a)  =  Vis(h_1[k + 1],b)$ for every
$b \in  A$, hence $v  \in  Vis(h_1[k + 1],b)$ for every $b$.  By assumption
that $h_1[k + 1]  \sim_A^E  h_2[k + 1]$ and by definition of $\sim_A^E$,
there must exist some $b  \in  A$ such that $Vis(h_1[k + 1],b)  = 
Vis(h_2[k + 1],b)$, and for every $v'  \in  Vis(h_1[k + 1],b)$, $v'  \in 
h_1[k + 1]$ iff $v'  \in  h_2[k + 1]$. Hence this property must be met by
$v  \in  Vis(h_1[k + 1],b)$, which ends our proof.
\end{proof}

\begin{remark}
Lemma~\ref{lemma-two-histories} still
allows $A$-cast \CGS\ in which, for some histories $h_1$ and $h_2$,
$h_1 \sim_A^C h_2$ but $h_1 \not \sim_A^D h_2$ when $h_1[1]\neq h_2[1]$.  
That is, this lemma does not imply that common knowledge and distributed knowledge coincide on
histories in general.
\end{remark}

%\vspace*{0.2cm}
\underline{ {$A$-formulas}.} Let $A \subseteq Ag$ be a set of agents. We denote the  subclass of $ATL$ formulas where coalition operators only involve sets  $B \subseteq A$ of agents as
 \emph{$A$-formulas}. 
 
 %We now prove the main theoretical result in this contribution.
%\new{
\begin{theorem}\label{thm-A-cast}
The model checking problem for $A$-cast \CGS{} and $A$-formulas is 2-EXPTIME-complete.
\end{theorem}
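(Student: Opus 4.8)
The plan is to prove the two bounds separately. For the lower bound, 2-EXPTIME-hardness, I would reduce from model checking $ATL$ with imperfect information and perfect recall on the known hard fragments — for instance, the single-agent case (distributed synthesis with one component, i.e.\ two-player games with imperfect information against the environment), which is already 2-EXPTIME-hard. The key observation is that an $A$-cast \CGS\ with $A = Ag \setminus \{c\}$ where $c$ is the sole adversary (environment) places essentially no restriction beyond conditions (Cmp) and $(\dagger^A)$ on how $c$ exposes its atoms; since with a single coalition-external agent $(\dagger^A)$ is vacuous and (Cmp) only forces $c$ to re-declare visibilities at each step, one can encode an arbitrary iCGS (using the reduction of \cite{BBDM19b} from iCGS to \CGS, suitably adjusted so the adversary's commands are ``complete'' in the sense of (Cmp)) and an arbitrary $\naww{A}\varphi$ objective. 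The hard part of the lower bound is checking that the iCGS-to-\CGS\ encoding can be massaged into $A$-cast form without changing the truth value of the formula; I expect this to be routine but requires care in how the single adversary's visibility atoms are handled.

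For the upper bound I would proceed by structural induction on the $A$-formula $\varphi$, bottom-up, labelling states (or rather, an appropriate finite object) with the subformulas they satisfy, in the standard $ATL$ model-checking fashion. The only non-trivial case is a coalition modality $\naww{B}\psi$ with $B \subseteq A$, where the inner temporal objective $\psi$ is built from already-labelled subformulas. Here the plan is to turn the question ``does coalition $B$ have a uniform joint strategy enforcing $\psi$ from the current history class'' into a two-player game with imperfect information between coalition $B$ (as a single ``planner'' that must play uniformly with respect to common knowledge of $B$) and the rest of the agents, and then appeal to the finite-state solvability of such games via a Reif-style powerset/knowledge-set construction along the lines of \cite{berwanger-tracking}. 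The essential point — and this is exactly where Lemma~\ref{lemma-two-histories} enters — is that, because the game is played inside an $A$-cast \CGS\ and all histories of interest start in a fixed initial state and are consistent with the fixed joint strategy of $B$ (indeed of $A \supseteq B$), two histories that the coalition cannot distinguish by common knowledge are in fact indistinguishable by distributed knowledge, hence induce the same action. This collapses the would-be-infinite tree of ``knowledge gaps'' into a finite information set: the belief that $B$ maintains about the current state is a subset of $S$, and crucially the reachable such subsets are finitely many and closed under the transition dynamics, because $\sim_B^C$ and $\sim_B^D$ coincide on the relevant histories. I would then argue that the resulting game has a winning strategy for $B$ iff $\naww{B}\psi$ holds, that the game has size singly exponential in $\card{\CGSname}$, and that solving it (a parity/Büchi-type game depending on $\psi$) costs another exponential, giving the 2-EXPTIME bound overall; since $\varphi$ has constant nesting depth relative to the input, iterating over subformulas does not increase the complexity class.

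**The main obstacle** I anticipate is making the knowledge-set construction genuinely finite and correct in the presence of nested coalition operators: when an inner subformula is itself of the form $\naww{B'}\psi'$ with $B' \subseteq A$ but possibly $B' \ne B$, one must ensure the labelling computed for the inner formula is still meaningful along the histories explored by the outer game, i.e.\ that ``$\naww{B'}\psi'$ holds at history $h$'' depends only on $last(h)$ (or on a finite quotient), not on the full history. This is the point at which one needs that the information-set of the coalition is a function of a finite amount of data, which again rests on Lemma~\ref{lemma-two-histories} applied at each level. I would isolate this as a separate lemma: for any $B \subseteq A$, the set of $\sim_B^D$-classes of histories reachable from a fixed initial state under a fixed $F_A$ is finite and computable, and membership of a history in a given class is determined by a finite-memory observation function — and then the rest of the inductive argument goes through by the usual $ATL$ reasoning. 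The remaining steps (correctness of the game reduction, the exact automata-theoretic cost of solving it for $X$, $U$, $R$ objectives over a labelled arena) are standard and I would only sketch them.
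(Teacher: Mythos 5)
Your upper-bound plan follows the same route as the paper (structural induction, then a knowledge-set game for each $\naww{B}\psi$ whose finiteness rests on Lemma~\ref{lemma-two-histories}), but the finitary abstraction you propose is too coarse at exactly the critical point. You summarise the coalition's information as ``a subset of $S$'' and justify this by saying that ``all histories of interest start in a fixed initial state''. That premise is false under the subjective interpretation used here: the outcome set $out(h,F_A)$ ranges over paths whose initial segments are merely indistinguishable \emph{to some member} of the coalition, and an $A$-cast \CGS\ deliberately allows an arbitrary indistinguishability relation on its (multiple) initial states --- this is precisely how it generalises the fork-free architectures. Lemma~\ref{lemma-two-histories} collapses common knowledge to distributed knowledge \emph{only} for histories sharing the same first state; the remark following it stresses that the collapse fails when $h_1[1]\neq h_2[1]$. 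Consequently, a uniform strategy may legitimately act differently on two common-knowledge-related histories with different initial states, and a belief set that records only current states cannot express this. The paper's construction therefore tracks, for every ``thread'', the pair (initial state, current state) plus a bookkeeping bit --- macro-states are subsets of $S_0\times S\times\{0,1\}$ closed under the common-knowledge relation on such pairs, with the uniformity constraint imposed only across tuples whose last states are $\sim_a$-related. Without recording initial states your game is unsound; once you add them, your construction essentially becomes the paper's. (Your complexity accounting also needs adjusting: the macro-state game is exponential in the \emph{explicit} state space of the \CGS, which is itself exponential in the symbolic description; solving the resulting reachability/safety game is polynomial in its size, and that is where the doubly exponential bound comes from, not from an extra exponential in game solving.)

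The lower bound is also not justified as stated. Two-player games with imperfect information against the environment with plain reachability/safety (i.e.\ $ATL$-style $X$, $U$, $R$) objectives on explicitly presented arenas are EXPTIME-complete, not 2-EXPTIME-hard; the 2-EXPTIME-hardness of one-component distributed synthesis comes from LTL path objectives, which are not expressible as $A$-formulas in the fragment of Definition~\ref{def:ATL*}. The hardness here must instead be driven by the succinctness of the \CGS\ representation (states are valuations of atoms, so the explicit model is exponential in the input), which is why the paper reduces from model checking two-player \emph{reactive modules} with imperfect information \cite{GutierrezPW16}, encoded as singleton-coalition model checking on \CGS. Your reduction via \cite{BBDM19b} from an explicitly given iCGS can only transfer EXPTIME-hardness. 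A small additional slip: condition \ensuremath{(\dagger^{A})} is vacuous when the \emph{coalition} $A$ is a singleton (so it is harmless for the singleton-coalition encoding), not when there is a single coalition-external agent; with one adversary $c$ and $\card{A}\geq 2$ it still forces $c$ to expose its atoms to all of $A$ uniformly.
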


\begin{proof}%[Proof sketch]
	The proof proceeds by structural induction on the given
	formula.  We illustrate the technique when the formula is
	either of type $\naww{A} p_1 U p_2$ or $\naww{A}p_1R p_2$,
	which can be easily generalized to the case of formulas
	$\naww{B} p_1 U p_2$ or $\naww{B}p_1R p_2$ when $B \subseteq
	A$, since then any $A$-cast \CGS\ is also a $B$-cast \CGS.
	The general case follows by the usual structural induction
	technique: for each subformula of the type $\naww{B} \phi$, we
	append a new atomic proposition $p_{\naww{B}\phi}$ which
	labels each state where coalition $B\subseteq A$ has a uniform
	strategy to achieve $\phi$.

	We construct a two-player game that generalises the
	information-set construction for solving two-player zero-sum
	games with imperfect information in \cite{vanderMeyden2005}.
	The difficult part is to show that the information sets do not
	``grow'' unboundedly.  To this end, it suffices that, for each
	history compatible with a joint $A$-strategy, we keep track
	just of the first and last state in the history.
        %\vad{Rev1:In
	%the proof of theorem 1 you talk about how Lemma 1 is about
	%histories that share initial and final states and are related
	%by common knowledge.  However Lemma 1 does not mention final
	%states.  And surely the strategy only behaves identically
	%wrt. agents within the coalition?}
        %\catalin{corrected, we
	%speak of first and last state in the history}
        This is so
	since, by Lemma \ref{lemma-two-histories}, every uniform
	$A$-strategy will behave identically on two histories that
	share the first and last states and are related by the common
	knowledge relation for $A$.
	
%and hence only the initial and the final 
%	states of each history belonging to some common knowledge neighborhood need to be 
%	bookkept.
	
%The idea is that, for a
%	sufficiently large history-length $n$, the information sets
%	generated by a uniform joint strategy $\sigma_A$ for coalition
%	$A$ at level $n+1$ are no more complex that the information
%	sets generated by  $\sigma_A$ at level $n$.

%%%%%%%%%%%%%%%%%%%% This part was pushed back in the main file
More technically, we build a turn-based game $\hat \CGSname$ with
	perfect information in which the protagonist simulates each
	winning strategy for coalition $A$ on some finite abstractions of
	information sets.
Set $\hat S$ contains macro-states $V$ defined as tuples whose most
	important component are triples $(q,r,b)$, where $q \in S_0$
	and $r\in S$ represent the initial and the final state in a
	history, while $b$ is a bookkeeping bit for remembering
	whether the objective $p_1 U p_2$ has been reached or not
	during the encoded history.  Two triples $(q,r, b),(q',r',b')$
	in $V$ are connected via the common knowledge relation of
	$\CGSname$ extended to tuples. Hence, intuitively,
	macro-states along a run in $\hat \CGSname$
%therefore $V$
	abstract a set of histories of equal length and related by the
	common knowledge relation in $\CGSname$.

%	\vad{Rev1: Later (final paragraph of page 8) you say that any
%	uniform strategy prescribes the same action for two histories
%	of equal length with the same start and final states.  Surely
%	it only prescribes the same action for agents within the
%	coalition and furthermore the histories have to
%	indistinguishable in terms of common
%	knowledge.}  \catalin{corrected!}
The essential step in the
	construction of $\hat \CGSname$ is to show that this
	abstraction is compatible with any uniform strategy. That is,
	for any two histories of equal length, starting in the same
	initial and ending in the same final state and related by the
	$\sim_A^C$ relation (on tuples of states), any uniform
	strategy prescribes the same action to any agent $a$ in
	coalition $A$. So, agent $a$ does not need to remember all
	intermediate steps for deciding which action to choose, only
	the initial state and the final state of each history suffice.
	By doing so, all strategies in the original game $\CGSname$
	are simulated by strategies in the resulting game
	$\hat \CGSname$, and vice-versa.

	We now formalise the construction of game $\hat \CGSname$.  To
	this end, given a set $ V \subseteq S \times
	S \times \{0,1\}$, we denote by $\approx_a$ the relation of
	$a$-indistinguishability extended to tuples: $(q,r,
	b) \approx_a (q',r', b)$ iff $q \sim_a q'$ and $r\sim_a r'$
	for any bit $b$. Then, $\approx_A^{C,V}$ is the common
	knowledge equivalence induced by $\{ \approx_a \}_{a\in A}$
	and restricted to $V$.  Since the bit $b$ does not play a role
	in the $\approx_a$-relation, we often simply write $(q,r) \approx_a
	(q',r')$.

	The set of states in $\hat \CGSname$ is $\hat S = \hat S_\Diamond\cup
	\hat S_\Box$, with $\hat S_\Diamond$ (resp.~$S_\Box$) denoting the protagonist (resp.~antagonist) states,
	and
%	\vspace*{-6pt}
	\begin{align*}
	\hat S_{\Diamond} & = \big\{  (V,\lambda) \mid V \subseteq S_0 \times S \times \{0,1\}, 
	\text{ for all } (q,r,b),(q',r',b') \in V, \\
	& \qquad (q,r) \approx_A^{C,V} (q',r') \text{ and } \lambda : V \to \{\emptyset, p_1,p_2\} \big\}  \\
	\hat S_{\Box} & = \big\{ (V,\lambda,\sigma) \mid (V,\lambda) \in S_\Diamond, \sigma : V \to Act_A \big\} \\
	\hat S_0 & = \big\{ (V,\lambda) \in \hat S_\Diamond \mid \text{ for all } (q,r,b) \in V, q =r , b=0 \text{ and }  \lambda(q,q,0) = q \cap \{p_1,p_2\} \big\}
	\end{align*}

%	\vspace*{-6pt}	
	The game proceeds as follows: 
	in a macro-state $(V,\lambda)\in \hat S_{\Diamond}$ the protagonist 
	chooses a one-step strategy profile $\sigma : V \to Act_A$ %\vad{what is $proj_2$?Also, I think that you missed ":"} 
	such that,
	for each $(q,r,b),(q',r',b') \in V$, if $r \sim_a r'$ for some $a \in A$, then $(\sigma(q,r,b))_a = (\sigma(q',r',b'))_a$.
	The control then passes to the antagonist which computes a new state as explained below:
	First, in the resulting state $(V,\lambda,\sigma)$, 
	one computes the set $W$ of outcomes:
%	\vspace*{-6pt}
	\begin{align*}
	W = \big\{ (q,s,b') \mid \text{ for some } (q,r,b) \in V, s = out(r,\sigma(r)) 
	\text{ and } b' = b \vee (p_2\in \lambda(s)) \big\}
	\end{align*}

%\vspace*{-6pt}
%\vad{Rev1: The overloading of W and V as both sets and as functions on sets W and V is confusing.}
%\catalin{corrected to $\br W$.}
Then, for each tuple $(q,r,b') \in W$, one computes
	$\widehat{W}(q,r,b') = \big\{(q',r',b'') \in W \mid
	(q',r') \allowbreak \approx_A^C (q,r)\big\}$.
	Additionally, a labelling $\lambda'$ is associated to $\hat W(q,r,b)$
	as follows: for each $(q,s,b) \in W$, let $\widehat{V}(q,s,b) = \big\{
	(q,r,b') \in V \mid s \in \tau(r,\sigma)\big\}$.  
%	\ioana{$f$ in the above should really be $f(q,s,b)$, as in $ s \in \tau(r,f(q,s,b))$ }
%	\catalin{Actually it's $\sigma$}  	
	Then put:
%	\vspace*{-6pt}
	\begin{align*}
	\lambda'&(q,s,b) = 
	\begin{cases}
	\{p_2\} & \text{ if } p_2 \in s \text{ or } b=1 \\
	%for all } (q,r) \in V, p_2 \in \lambda(q,r)  \\
	\{p_1\} & \text{ else if for all } (q,r,b') \in \widehat{V}(q,s,b), p_1 \in \lambda(q,r,\new{b'}) \\
	\emptyset & \text{ otherwise}
	\end{cases}
	\end{align*}
	The antagonist then picks a tuple $(q,r,b')\in W$ and designates the new protagonist state as the tuple $(\hat W(q,r,b'),\lambda'\restr{\hat W(q,r,b')})$.

%	The labeling $\lambda$ in each state gives the objective of this
%	two-player game.  Note that all tuples in initial macro-states from
%	$\hat S_0$ are labeled with $p_1$, $p_2$ or nothing, according to
%	their label in $\CGSname$.  Then, whenever a new macro-state
%	$(W,\lambda')$ is computed by antagonist from a previous macro-state
%	$(V,\lambda,f)$, 

	%\ioana{in the below, we need  $\lambda(q,r,b)$ every time we write $\lambda(q,r)$  }
	
	The objective of the protagonist is to reach macro-states $(V,\lambda)$ with 
	$\lambda(q,r,b) = p_2$ for all $(q,r,b) \in V$.  We can show that in
	$\CGSname$ coalition $A$ has a strategy to enforce $p_1 U p_2$ iff in
	$\hat \CGSname$ the protagonist has a winning strategy.  The easy part
	is the inverse implication, since each winning strategy for the
	protagonist in $\hat \CGSname$ gives enough information to construct a
	joint strategy for coalition $A$ in $\CGSname$ to achieve $p_1 U p_2$.
	As for the difficult part, consider a strategy $\sigma$ for coalition
	$A$ in $\CGSname$ achieving $p_1 U p_2$.  In order to build a strategy
	for the protagonist to win the game $\hat \CGSname$, we need to decide
	what action $f_A : V \to Act_A$ the protagonist chooses at a macro-set
	$(V,\lambda) \in \hat S_\Diamond$. Now, each triple $(q,r,b) \in V$ may represent 
	a whole family of histories
	$(h_l)_{l \in L}$ in $\CGSname$ with $h_l[1]=q$ and $h_l[|h_l|] = r$,
	all of them related by the common knowledge relation in \CGSname.
	So, one might face the problem 
	of consistently choosing one of the actions $\sigma(h_l)$ for playing at $(V,\lambda)$.  
	Fortunately, Lemma
	\ref{lemma-two-histories} ensures that whenever we have histories
	$h_1 \sim^C_A h_2$ with $|h_1| = |h_2|=n$, $h_1[1]=h_2[1]=q$, 
	$h_1[n]=h_2[n] = r$, and consistent with a uniform $A$-strategy $\sigma_A$, 
	then $\sigma_a(h_1) = \sigma_a(h_2)$ for all $a \in A$ -- and, as such, $\sigma(h_l)=\sigma(h_{l'})$ 
%	\textbf{is in fact the same action} 
	for all $l,l' \in L$.
	Then, by denoting
	$\widehat V$ the history of macro-states that leads to $(V,\lambda)$ with $n$ denoting the length of this history, 
	the strategy
	$\hat \sigma$ can be chosen as follows: for each $(q,r,b) \in V$, choose \emph{any} history $h$ of length $n$ in $\CGSname$ 
	with $h[1] = \new{q}, h[n] = r$ which is compatible with $\sigma$ and put 
	$	\hat \sigma(\widehat V) (q,r,b) = \big(\sigma_a(h)\big)_{a \in A}$.
	It is not difficult to see that, whenever $\sigma$ achieves objective $p_1 U p_2$ in $\CGSname$, 
	$\hat \sigma$ is winning for the protagonist in $\hat \CGSname$.

	The case of $p_1 R p_2$ is treated similarly, with the reachability
	objective being replaced with a safety objective for avoiding
	macro-states $(V,\lambda)$ in which there exists $(q,r,b)$ with $\lambda(q,r,b) = \emptyset$.

	%\ioana{still a bit lost me from here on...  I will re-read   }
	
	Finally, nesting of $A$-formulas is handled by 
	creating a new atom $p_{\phi}$ for each subformula $\phi$ of the given formula. 
	This atom is used to relabel the second component of each tuple $(q,r,b)$ belonging to each macro-state $(V,\lambda)$ as follows: 
	\begin{itemize}
	  \item If $\phi$ is a boolean combination of atoms, then append $p_\phi$ to $\lambda(q,r,b)$ iff $r\models \phi$.
	  \item If $\phi = \naww{B}p_1 U p_2$ for $B\subseteq A$, then append $p_\phi$ to $\lambda(q,r,b)$ iff 
	  in the two-player game constructed as above for coalition $B$
	  and starting in the set of initial states $(U,\lambda\restriction_{U})$ where $U = \{(q',r',b') \in V \mid \text{ for some } a \in B, (q',r') \sim_a^V (q,r)\}$,
	  the protagonist has a strategy for ensuring $p_1 U p_2$. A similar construction gives the labeling for $\phi = \naww{B} p_1  R p_2$. 	
	  \item If $\phi = \naww{B} X p$ for $B\subseteq A$, then append $p_\phi$ to $\lambda(q,r,b)$ iff
	  there exists a tuple of actions $f_B : U \sd Act_B$ (with $U$ defined at the previous item) such that 
	  all the macro-states $(U',\lambda')$ directly reachable by actions $f_A$ extending $f_B$ to the whole $Act_A$ 
	  have the property that for any $(q',r',b') \in U'$, $p \in r'$.
  \end{itemize}

To end the proof, we note that the above construction is exponential in size of the given \CGS.
Since we have a symbolic presentation of the \CGS, this 
gives the 2-EXPTIME upper bound. 
The 2-EXPTIME lower bound follows from the results on model-checking reactive modules with imperfect information \cite{GutierrezPW16}, p398, table 1, 
general case for two-player reactive modules, which can be encoded as a model-checking problem for singleton coalitions 
for \CGS.
%The lower bound follows from the fact that the sure winning problem for 
%two-player games with imperfect information and reachability objectives is EXPTIME-complete \cite{chatterjee-doyen} and
%can be encoded as a model-checking problem for a singleton coalition formula of the type $\naww{protagonist}F\varphi$.
\end{proof}
	
%}

%To conclude, we remark that the \CGS\ in which we encode the 
%security example in Sec.~\ref{example} is a \CGS\ with broadcast towards coalition 
%$\{P, \mathcal{A}t \}$, where  $P$ is an RFID card and 
%$\mathcal{A}t$ is the system-attacker.  As a consequence, the decision procedure provided by
%Theorem \ref{thm-A-cast} can be used to detect collusion-based 
%attacks in secure systems.

%\input{a-cast-comparison.tex}

\section{Coalition-cast \CGS\ in Modelling Security Problems}
\label{example}
%short example
% !TEX root = cav20.tex

We now apply coalition-cast \CGS\ to the modelling of secure systems
and their threats. We start by recalling the setting of Terrorist-Fraud Attacks \cite{AD:2019}
and the distance-bounding protocol by Hancke and Kuhn \cite{HK05}, 
then give an $A$-cast \CGS\ model for this protocol and an ATL formula expressing the existence of an attack against the protocol,
and finally we argue why the results from Section 4 are relevant for this case study.

%\vspace*{-0.2cm}
\paragraph{\textbf{Identity Schemes.}}
\label{subsec:hk}
 Distance-bounding (DB) protocols  are identity schemes which can be summarised as
  follows: (1) via timed exchanges, a  \emph{prover} $P$ demonstrates that it is physically situated within a bound
  from a  \emph{verifier} $V$; (2) via these
  exchanges, the prover also authenticates himself to the verifier.
% %
Herein, we consider the distance-bounding (DB)  protocol by  Hancke and Kuhn~\cite{HK05}. A version of this protocol is summarised  in
Fig~\ref{hancke}.  In this protocol, a verifier $V$ will finish successfully if it authenticates a prover $P$ 
and if it is conceived that this prover is  physically situated no further than a given bound.
%
%\vspace*{-0.7cm}
\begin{small}
 \begin{figure}[!h]
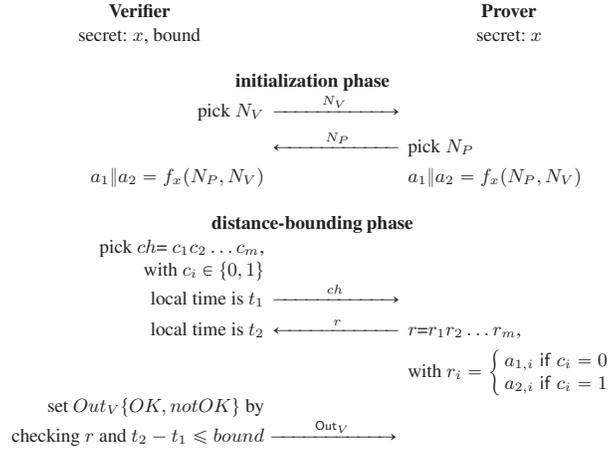

	\centering
	\begin{footnotesize}
		\scalebox{0.8}{
			\begin{tabular}{@{}rcl@{}}
				\multicolumn{1}{c}{\textbf{Verifier}} &&
				\multicolumn{1}{c}{\textbf{Prover}} \\
				\multicolumn{1}{c}{secret: $x$, bound} &&
				\multicolumn{1}{c}{secret: $x$} \\
				&& \\
				\multicolumn{3}{c}{\textbf{initialization phase}} \\
				pick $N_V$ & $\xrightarrow[\hspace{1.9cm}]{N_V}$ & \\
				& $\xleftarrow[\hspace{1.9cm}]{N_P}$ & pick $N_P$ \\
				$a_1\|a_2= f_x(N_P,N_V)$ && $a_1\|a_2= f_x(N_P,N_V)$ \\
				\phantom{$r_i=\left\{\begin{array}{ll}
					a_{2,i} & \mathsf{if\ }c_i=2
					\end{array}\right.$}
				&& \\
				\multicolumn{3}{c}{\textbf{distance-bounding phase}} \\
				%\multicolumn{3}{c}{for $i=1$ to $m$} \\
				pick $ch$= $c_1c_2 \ldots c_m$, \\ with $c_i\in\{0,1\}$ && \\
				local time is $t_1$ &
				$\xrightarrow[\hspace{1.9cm}]{ch}$ & \\
				local time is $t_2$  &
				$\xleftarrow[\hspace{1.9cm}]{r}$ &
				$r$=$r_1 r_2 \ldots r_m$, \\
				& & with $r_i=\left\{\begin{array}{ll}
				a_{1,i} & \mathsf{if\ }c_i=0 \\
				a_{2,i} & \mathsf{if\ }c_i=1 \\
				\end{array}\right.$ \\
				set $Out_V \{OK, not OK\}$ by \\
				checking  $r$ and $t_2 - t_1 \leq bound$ &
				$\xrightarrow[\hspace{1.9cm}] {\mathsf{Out}_V }$ \vspace*{-0.4cm}
			\end{tabular}
		}
	\end{footnotesize}
	\caption{A Depiction of the Hancke-Kuhn (HK) Protocol}
	\label{hancke}
\end{figure}
\end{small}
In this protocol, the prover and the verifier share a long-term key
$x$. In the initialisation phase,
%entails that,
%for well-sampled keys $x$, 
$f_x(\ldots)$  produces a
pseudo-random bitstring $a_1 || a_2$,   where `` $| |$ '' is concatenation. In the distance-bounding phase, 
 $V $ pick randomly $m$ bits  $c_i$ and forms the challenge bitstring/vector $ch$. It sends this to the prover and times the time to it to respond.  
 $P$'s  response-vector $r$  is formed from bits $r_i$, where $r_i$ is the $i$-th bit of
$a_1$ if $c_i$ is $0$, or the $i$-th bit of $a_2$ if $c_i$ is $1$.
Finally, the verifier checks if the  response was correct and if it did not take too long to arrive. 

 \paragraph{\textbf{HK Versions.}}  In the original HK protocol~\cite{HK05}, the challenges and responses are sent not in one bitstring $ch$ and $r$ respectively, but rather bit by bit, i.e., $c_i$ and $r_i$. The one  timing step is also broken into $m$ separate timing. W.r.t. terrorist fraud (see description below),  it can be easily proven that if there is an attack on the version of protocol presented in Figure~\ref{hancke}, then there is an attack in the original protocol in~\cite{HK05} and vice-versa.  Formal symbolic verification methodologies~\cite{Blanchet2012} such as the recent~\cite{Debant18,Mauw18,AD:2019}  are only able to capture the HK version presented in Figure~\ref{hancke} (i.e., modelling 1-bitstring challenge and not $m$ 1-bit challenges); and this is true   w.r.t. checking  for all attacks/properties, not just terrorist frauds.   Unlike these, in our \CGS\, we are able to naturally encode both the original HK protocol and the depiction  in Figure~\ref{hancke}. Yet, in Section~\ref{subsec:example}, we present a \CGS\  encoding of  the  HK version in Figure~\ref{hancke}, as it yields a model  less tedious to follow step by step. If these verification passes, then $Out_V$=$1$, otherwise this $Out_V$=$0$.

%\vspace*{-0.4cm}
 \paragraph{\textbf{Identity-fraud Attacks.}} A possible threat  is that of  a collusion-based attack called \emph{terrorist-fraud (TF)}. In this, a dishonest prover $P$, found
 far away from the honest verifier $V$, helps an adversary
$\mathcal{A}t$ pass the protocol as if $P$ was
close to $V$. However, a protocol is considered to suffer from such security problem, if  there is a strategy for 
a dishonest $P$ to give this help without jeopardising his authentication in future sessions. I.e., the help can aid $\mathcal{A}t$  in the collusive session alone, and thereafter $\mathcal{A}t$ can no longer authenticate as $P$.

It is known~\cite{HK05} that the HK protocol does suffer from a TF attack. Namely, a dishonest far-away
prover first colludes with the attacker in one execution, and gives him $a_1$ and $a_2$,
before the timed phase starts. After this data-sharing, the attacker can place himself close to $V$ and it
will be able to answer correctly and within the time-bound just during the execution  for which $a_1$ and $a_2$ were (privately) leaked by 
$P$ before the timed-phase started.

%\vspace*{-0.4cm}
 \paragraph{\textbf{Attack finding on \CGS.}}
\label{subsec:example}  

We now give a coalition-cast \CGS\ that can be used to verify if, in the HK protocol, there exists a strategy by a (dishonest) prover $P$ and  the attacker  $\mathcal{A}t$  
 amounting to a TF attack (or the TF described above).

{\bf Modelling details}}.
The set of agents is $Ag = \{P, \mathcal{A}t, V\}$, i.e., prover $P$,
attacker $\mathcal{A}t$, and verifier $V$. The coalition in this $A$-cast \CGS\ is $A$ = $ \{P, \mathcal{A}t\}$.

%In each  execution of the HK protocol,  the complete  distance-bounding phase yields a randomly 
%generated $m$-bit vector $ch$ for the challenges part and on 
%a pseudo-randomly produced $2m$-bit vector $r$ for the responses part. 
%maybe
%More specifically, we prefix each such bitstring with agent-names to denote from whose perspective 
%we consider it: e.g., $P\_r$ are   responses as stored/sent by $P$, $\mathcal{A}t\_r$ are  responses as stores/sent by  $\mathcal{A}t$; 
%

We model {$n$
executions of the HK protocol under adversarial conditions. In
one arbitrarily-picked execution denoted $d$, the prover is far from $V$ and gives away his timed-phase response to the attacker.
Apart from such leakages, the prover $P$ behaves honestly as per the HK specification. The verifier $V$ is also honest.}

To check the existence of the TF attack in our \CGS\ modelling, it is sufficient to just model  the execution of
distance-bounding  phase and to encode  that
$P$ is close/far to $V$, without modelling the explicit distances/timing; this will be become clearer during the modelling we present, and we will 
re-explain at the end of said encoding. 

For simplicity, we use variables (as opposed to atoms). Boolean variables in the protocol equate to atoms; variables
of discrete and finite {enumerations} can be replaced by atoms, as per the \CGS\ definition. 
For the domain of the variables, our enumeration types are as follows:  $ch\_set$ for the domain of possible challenge-vectors $ch$,  
$enum^j_1$ for the domain of  prover-generated responses in each execution $j$, $enum_2$ for adversarially-generated responses. 
To capture the pseudo-randomness of  the responses, 
%all variables $P\_r$ and ${\mathcal{A}t}\_r$ modelling the
responses generated by  $P$ and $\mathcal{A}t$  will respectively range over  disjoint sets:
 each $enum^j_1$  used to model  execution $j$ of $P$'s is
disjoint from $enum_2$ used for $\mathcal{A}t$.  However, attacker $\mathcal{A}t$ will get some
values in $enum^k_1$ during the  collusion and its related data-sharing in an execution~$k$.

In all sessions, $\mathcal{A}t$'s aim is to forge $P$'s authentication. So, $\mathcal{A}t$
includes a Dolev-Yao (DY)
modelling~\cite{DY83}: \emph{$asg^{vals}_{DY}$} denotes that $\mathcal{A}t$
can update his variables based on their current values $vals$,
as per a DY attacker. The full DY specification is eluded due to
space constraints.

 Objects indexed by $j$ refer to HK's execution number $j \in \{1, 2,
\ldots, n\}$.

\textbf{Agents.} The set
\emph{$Var_P = \{ P\_far^j, P\_collude^j : \mathsf{bool}, P\_{r_{ch}}^j: \mathsf{enum^j_1}   \mid  j \leq n,  ch \in ch\_set \}$}  contains 
variables controlled\footnote{Variables controlled by a $\CGS$ agent $b$ are normally written $V_b$, but here we 
write $Var_b$ to avoid clashing with other notations in this protocols, i.e., the verifier.} by the prover $P$: whether $P$ is 
far from $V$ in the \mbox{$j$-th} HK-execution, whether $P$ is colluding in the \mbox{$j$-th} HK-execution, 
$P$'s 
response-vectors to each possible challenges $ch$ range over $enum^j_1$ in  the \mbox{$j$-th} HK-execution.
% In simple terms,  the set of variables $\{ P\_{r_{ch}}^j \}$ 
%encapsulates the response-table for execution $j$, i.e.,  
%the values $a_1 || a_2$ calculated for an execution $j$ of the HK protocol, where for whichever challenge bitstring $ch \in ch\_set$ 
%that that session $j$ may produce, there is a corresponding response  $P\_{r_{ch}}^j$ that the prover will send.

Similarly,  $\mathcal{A}t$'s  controlled variables are
\emph{$Var_{\mathcal{A}t} = \{ {\mathcal{A}t}\_{r_{ch}}^j: \mathsf{enum_2}  \allowbreak \cup  \mathsf{enum^j_1} \cup \{null\}  \, | \,  j \leq n, ch \in ch\_set  \}$},
 where $enum_2 \cap enum^j_1= \emptyset$ for each $j \in \{1, \ldots, n\}$.

The set $Var_V = \{ V\_c^j: ch\_set, V\_{r_{ch}}^j, V\_{rec_{ch}}^j: \mathsf{enum^j_1} \cup \mathsf{enum_2} \cup \{null\},\\
finished^j, ok^j : \mathsf{bool} \mid j \leq n, ch \in ch\_set \}$ contains  variables controlled by $V$. For each execution $j$, the verifier $V$ will actually calculated  a
 %\\ 
  challenge-vector $V\_c^j$ as well as  a response-vector
$V\_{r_{ch}}^j$ calculated locally by $V$,  and it will receive response-vector $V\_{rec_{ch}}^j$ as the reply to a given value $ch$ that  is encapsulated inside $V\_c^j$. Finally, via the $finished$ and $ok$ variables $V$ will store if the executions have finished and
they were successful.
 
\textbf{Initial States.} For simplicity, we describe the initial states w.r.t.~each agent.
%\vspace*{-4pt}
\begin{small}
\begin{eqnarray*}
 \textbf{\textbullet~initial states of the prover $P$:} & & \!P\_far^d,P\_collude^d
 \!:= \!\top; P\_collude^{k}\!:=\! \bot; %\\ 
 %\\ \>
%\
%& &  
P\_{r_{ch}}^j\!:=\! val^{j};  \\ \!\!\! \!\!\bigwedge_{u \in  \{P\_{r}^d\} } \!\!\!\!vis(u, \mathcal{A}t)\!: =\!\top
\end{eqnarray*}
\end{small}

%\vspace*{-6pt}
\noindent
for each $j \in \{1, \ldots, n\}$,  for $d$ arbitrarily picked in $\{1, \ldots, n\}$,  for each $k \in \{1, \ldots, n\}$, 
$k\neq d$, for each $ch \in ch\_set$, with $val^{j}$  in
$enum^j_1$.

The above means that,  in the ``malign'' HK-execution $d$, $P$ is far-away
from the verifier and $P$ colludes with $\mathcal{A}t$ in this HK-execution. 
%where $P$  and $\mathcal{A}t$ will not collude in any
%other HK-exectuion. 
Notice that for rest of the executions (i.e., $j \neq d$), the
position of $P$ is left to be
assigned non-deterministically.  Also, $P$ pre-calculates the
responses for any possible challenge, in each HK-execution $j$. Plus,
$P$ sees all his variables. As per the collusion, $P$ makes the response
for the $d$-th HK-execution visible to $\mathcal{A}t$.

%, .

\begin{small}
\begin{tabbing}
$ \textbf{\textbullet~ initial states of the verifier $V$:} \; \; \; $ \= $     V\_c^j:=ch^j;  V\_{rec_{ch}}^j:= null;$
 $\bigwedge_{u' \in Var_V} \!vis(u',V): = \top;$ \\
$\bigwedge_{u \in \{V\_c^j | 1 \leq j \leq n \}, E \in \{ \mathcal{A}t,P\}} \!vis(u,E) := \top$,  $V\_{r_{ch}}^j:= val^{j};$ $finished^j, ok^j :=\bot;$  
\end{tabbing}
\end{small}
for each $j \in \{1, \ldots, n\}$, for each $ch \in ch\_set$. So, $V\_c^j:=ch^j$ equates to $V$ setting the challenges for
the \mbox{$j$-th} HK-execution; each $ch^j$ is chosen randomly in $ch\_set$. Also, $V$ has received no response (i.e., $V\_{rec_{ch}}^j:= null$), the HK-execution $j$ has not finished, nor was it checked (i.e., $finished^j, ok^j :=\bot$). And, $V$ sees his
variables (i.e.,  $\bigwedge_{u' \in Var_V} \!vis(u',V): = \top$), releases all challenges for each HK-execution~$j$ (i.e., $\bigwedge_{u \in \{V\_c^j | 1 \leq j \leq n \}, E \in \{ \mathcal{A}t,P\}} \!vis(u,E) := \top$), and  has  computed his local 
responses with values $val^{j}$ (i.e.,  $V\_{r_{ch}}^j:= val^{j}$). 

The values {$val^j$} inside the initial states of $P$ and $V$ are produced via a  
function of $j$, a constant $x$, and the set $ch\_set$, with outputs uniformly in $enum^j_1$, which emulates the pseudorandom function $f_x$ in the HK protocol.
%which is
%It is used outside of the model to  simply generate the values $val^j$ of the response vector in $P$ and $V$.
%is not invertible without the knowledge of $x$. 
As the HK attacker does not know $x$,    $\mathcal{A}t$ can
only generate his responses over another domain,  $enum_2$.
%

%\vspace*{-10pt}
\begin{small}
%\begin{tabbing}
	$ \textbf{\textbullet~initial states of the attacker ${\mathcal{A}t}$:} {\mathcal{A}t}\_{r_{ch}}^j: =null; \bigwedge_{u \in Var_{ {\mathcal{A}t}}} vis(u, \mathcal{A}t):= \top; \\ \bigwedge_{E \in \{P,V\}} vis({\mathcal{A}t}\_{r_{ch}}^j,E): =\bot, $
%\end{tabbing}
\end{small}
%\vspace*{-4pt}
\noindent
for each $j \in \{1, \ldots, n\}$, for each $ch \in ch\_set$.

So,   $\mathcal{A}t$ has not yet computed his responses ${\mathcal{A}t}\_{r_{ch}}^j$ (i.e., ${\mathcal{A}t}\_{r_{ch}}^j: =null$).  Then, $\mathcal{A}t$ sees all his variables (i.e.,  $\bigwedge_{u \in Var_{ {\mathcal{A}t}}} vis(u, \mathcal{A}t):= \top$), but $\mathcal{A}t$'s responses 
are not yet visible to $V$ or to $P$ (i.e., $\bigwedge_{E \in \{P,V\}} vis({\mathcal{A}t}\_{r_{ch}}^j,E): =\bot$ ).

%{\bf FB:} perhaps what follows has to be moved later on in the paper.

%Note that all these $init$ command respect the condition $(\dagger^{\{p,a\}}_{x})$ (for  all $x \in \{p, v, a, \}$), w.r.t. us being on the way to define a $\{p,a\}$-cast $\CGS^{\{p,a\}}$.

\textbf{The $\texttt{update}$ commands for the prover $P$.} 
%If $P$ is close to $V$ in a HK-execution
%$k \neq d$, then $P$ behaves honestly.   In this case, 
The commands  $\mathbf{\gamma^{k,ch}_1}$ capture that, in any  honest HK-execution $k$, $P$ is ready for any challenge $ch$ (i.e., $val(V\_c^k \text{ in } vis(V\_c^k,P)) =ch$), and it will
answer with  $P\_{r_{ch}}^k$. Both $V$ and $\mathcal{A}t$ see the answer, as the setting inside $vis(\cdot)$  below shows:
%\smallskip
\begin{small}
\begin{tabbing}

	$ \mathbf{\gamma^{k,ch}_1}  \; \; \neg {P\_far}^{k} \!\!\!\wedge\!\! val(V\_c^k\! \text{ in } vis(V\_c^k\!\!,P)) \!=\!ch \!\rightsquigarrow$ \=$ \!    \bigwedge_{E \in \{ \mathcal{A}t,V\}} \!\!vis(P\_{r_{ch}}^k,E):=\top$	
\end{tabbing}
\end{small}
%\smallskip
for all $k$, $1\leq k \leq n$, $k \neq d$ and for each  $ch \in ch\_set$.

\textbf{The $\texttt{update}$ commands for $\mathcal{A}t$.} 
Via the commands $\mathbf{\gamma^{ch}_2}$, $\mathcal{A}t$ copies locally the response-table that $P$'s  gives him in the dishonest execution $d$.
\begin{small}
\begin{tabbing}
	$ \mathbf{\gamma^{val_{ch}^d}_2} \; \; \; val(P\_{r_{ch}}^d \text{ in } vis(P\_{r_{ch}}^d, \mathcal{A}t))= val_{ch}^d
  \rightsquigarrow$ \= $ {\mathcal{A}t}\_{r_{ch}}^d: =val_{ch}^d$
\end{tabbing}
\end{small}
for each value $val_{ch}^d \in enum^d_1$ set in the
initial states of $P$.

We add commands $\gamma^{vals}_3$ below:
\begin{small}
\begin{tabbing}
	$ \mathbf{\gamma^{k,vals}_3}$ \;   ``$Var_{\mathcal{A}t}$ take values $vals$''  $\!\rightsquigarrow\!  asg^{vals' \subset enum_3}_{DY}( {\mathcal{A}t}^k\_{r_{ch}})$ 
\end{tabbing}
\end{small}
for all $k \leq n$, $k \neq d$, for all $ch \in ch\_set$.
%
%{\bf FB:} are we sure the above can be expressed in our formalism and that it satisfies the condition on coalition broadcast?

The last command above denotes  the Dolev-Yao capabilities of our attacker. Namely,
%following.
for each $k$-th HK-execution  in which $\mathcal{A}t$ is not helped by $P$, for all values
 $vals$ at a current state of $\mathcal{A}t$'s, then $\mathcal{A}t$  composes possible responses.  
 Each such composed value lies in
 a set $enum_3$, which includes $\mathcal{A}t$'s ``own'' values in $enum_2$,
 responses that $P$ collusively gave to $\mathcal{A}t$ (i.e., $\{{P}\_{r_{ch}}^d$\}) in $enum^{d}_1$, together
with values in  $enum^{k}_1$ that $\mathcal{A}t$  collects when $P$ sends responses out  via $ \mathbf{\gamma^{k,ch}_1}$  commands.
%(which are in $ \bigcup_{k \leq n, k \neq d} \{
 %P_{r_{ch}}^k \, \mid \, V\_c^k =ch \wedge P\_far^k = \bot \}$). 
 I.e.,
 $enum_3$= $enum_2 \cup \{ {P}\_{r_{ch}}^d  | ch \in ch\_set\} \cup \bigcup_{k \leq n, k \neq d} \{
 P_{r_{ch}}^k \, \mid \, V\_c^k =ch \wedge P\_collude^k = \bot | \text{ for some } ch \in ch\_ set \}$.
 
% For example, he can
% apply a set of operations over the current values to obtain new
% values. Note that the values that the attacker can assign via his
% theory range over a set we denote $vals'$ and this is included in
% $enum_3=enum_2 \cup \{{\mathcal{A}t}\_{r_0}^d, {\mathcal{A}t}\_{r_1}^d\}$. 
%So, 
% the attacker cannot 
% guess the values of the responses that the prover computed  and did not already leak to the attacker
% (those were over
% $enum_1$). 
 %We write all these
 %via $asg^{vals' \subset enum_3^{2n}}_{DY}({\mathcal{A}t}\_{r_0}^k, {\mathcal{A}t}\_{r_1}^k)$.
%
% 

The commands $\gamma^{j,ch}_4$, for each  $j \leq n$ and each $ch \in ch\_set$, say
that $\mathcal{A}t$ answers $V$'s challenge in each HK-execution (that is, $ \mathcal{A}t$ set the $vis(\cdot)$ atoms for $V$ accordingly, as per the below):
\begin{small}
\begin{tabbing}
	$ \mathbf{\gamma^{j,ch}_4}$ \; $val(V\_c^j \!\text{ in } vis(V\_c^j, \mathcal{A}t)) \!=\!ch \wedge {\mathcal{A}t}\_{r_{ch}}^j \neq null   \!\rightsquigarrow\!  vis({\mathcal{A}t}\_{r_{ch}}^j,V):=\top$
\end{tabbing}
\end{small}

\textbf{The $\texttt{update}$ commands for $V$.} 
%First, we force the verifier to  hide every variable that is not a challenge:
%\begin{tabbing}
%  $ \mathbf{\gamma_8} \; \; \; \top \rightsquigarrow $ \= $\bigwedge_{u \in V_v \setminus \{V\_c^j | j \in  \{1, \ldots, n\} \}} vis(u,p):= \bot$,\\
% \> $\bigwedge_{u \in V_v \setminus \{V\_c^j | j \in  \{1, \ldots, n\} \}} vis(u,a): = \bot,$
%\end{tabbing}
%
%First, $V$ computes local answers to his challenges, via commands $\mathbf{\gamma_5}$, for $j \leq n$:
%\begin{small}
%\begin{tabbing}
%$ \mathbf{\gamma^j_5} \; \; \;  \top \rightsquigarrow $ \= $ V\_{r_0}^j:= val_0^{j}; V\_{r_1}^j:=val_1^{j}$
%\end{tabbing}
%\end{small}mathcal{
% %
Then, in each execution $j, 1 \leq j \leq n$,  for any  challenge $ch \in ch\_set$ it may have sent, $V$ will store the response  from  $\mathcal{A}t$ and/or $P$, via commands $\mathbf{\gamma^{j,ch}_5}$ and $\mathbf{\gamma^{j,ch}_{6}}$ (this storing is done via e.g. $V\_{rec_{ch}}^j: = val({\mathcal{A}t}\_{r_{ch}}^j)$):
\begin{small}
\begin{tabbing}
	$ \mathbf{\gamma^{j,ch}_{5}} $ \=  \; \; \;  \= $  (vis({\mathcal{A}t}\_{r_{ch}}^j, V)  \wedge \neg finished^j) \rightsquigarrow  V\_{rec_{ch}}^j: = val({\mathcal{A}t}\_{r_{ch}}^j)$\\
	$ \mathbf{\gamma^{j,ch}_{6}} $ \=  \; \; \; \= $  (vis(P\_{r_{ch}}^j, V)  \wedge \neg finished^j) \rightsquigarrow  V\_{rec_{ch}}^j: = val(P\_{r_{ch}}^j)$
\end{tabbing}
\end{small}

So, as with all commands of this type, note that $ \mathbf{\gamma^{j,ch}_{5}} $  is in fact a class of commands, for any possible value of $ch$, for any possible value/id of an execution $j$.

 Finally, $V$ can check the response, in each execution $j \leq n$, whichever the challenge 
 $ch \in ch\_set$ was, and the $ok^j$ are set if it authenticates $P$:
%(i.e., if the answers received $V\_{rec_b}^j$
%are equal to what the verifier had locally in $V\_{r_b}^j$):%% \not P collut  \vee x
\begin{small}
\begin{tabbing}
	$ \mathbf{\gamma^{j,ch}_{7}} $ \=  $ (V\_c^j =ch \wedge  V\_{rec_{ch}}^j =V\_{r_{ch}}^j \wedge  \neg finished^j) \rightsquigarrow $  $finished^j :=\top; ok^j:=\top $ \\
%	$ \gamma_{11} $ \= ::= \= $ (V\_c^j :=b \wedge  V\_{rec_b}^j \neq V\_{r_b}^j \wedge \neg finished^j) \rightsquigarrow $ \\ \> \> $finished^j :=\top, ok^j:=\bot $	
	$ \mathbf{\gamma^{j,ch}_{8}} $ \= $ (V\_c^j \!=\!ch \!\!\wedge\!\!  V\_{rec_{ch}}^j \!\neq\! null  \!\!\wedge\!\! V\_{rec_{ch}}^j \!\neq\! V\_{r_{ch}}^j \!\!\wedge\!\!  \neg finished^j) \!\rightsquigarrow\!  finished^j \!:=\!\top; ok^j\!:=\!\bot $
\end{tabbing}
\end{small}

Lastly, we will like to note that in each of the  commands $ \mathbf{\gamma^{j,ch}_{5}} $, $ \mathbf{\gamma^{j,ch}_{6}} $,  $ \mathbf{\gamma^{j,ch}_{7}} $, $ \mathbf{\gamma^{j,ch}_{8}} $, there is an additional update, namely $\bigwedge_{ E \in \{ \mathcal{A}t,P\}}vis(\cdot, E): = val(vis(\cdot, E))$. This is to say that the verifier $V$, at each update commands, set the visibility atoms for the prover and the adversary; in this model, the verifier $V$ sets the values always to what they were beforehand (which is what  $\bigwedge_{ E \in \{ \mathcal{A}t,P\}}vis(\cdot, E): = val(vis(\cdot, E))$ denotes). This update is added to comply with condition $[{\rm (}Cmp{\rm)}]$ of Def.~\ref{restr} of $A$-cast systems; we did not write it in the commands above for sake of readability.

~

\textbf{The TF-encoding Formula.}  
We use the following self-explained  propositions:
\begin{itemize}
\item $\mathsf{auth\_via\_help}$  as: $(P\_far^{d} \wedge P\_collude^{d}) \to  ok^d$
\item $\mathsf{noAuth\_after\_help}$ as:
\begin{small} 
$\bigwedge_{j \in \{d, \ldots,  n\}, j \neq d}\big(  (P\_far^{d} \wedge P\_collude^{d})  \rightarrow (  P\_far^{j}  \rightarrow \neg ok^{j}) \big)$
\end{small}
\end{itemize}

%\normalfont

Then, our TF-encoding formula says: there exists a
collusion between the far-away prover $P$ and attacker $\mathcal{A}t$
%(i.e., $\naww{p,a}$)
such that execution $d$ ends successfully, yet whatever 
$\mathcal{A}t$ does,
%(i.e., $\all{a}$),
%$\mathcal{A}t$
she cannot pass in a subsequent  execution $j$ as a simple man-in-the-middle:
%\vspace*{-2pt}
%\smallskip
%	\begin{small}
\[\naww{P,\mathcal{A}t} F ( \mathsf{auth\_via\_help} \Rightarrow \all{\mathcal{A}t} G\,  \mathsf{noAuth\_after\_help}).\]
%\end{small}
%\end{center}

~

%\smallskip
%\vspace*{-2pt}
\textbf{Relevance of results in Section~\ref{decproof}.} 
We outline here the main arguments which support the relevance of the results for the verification of the existence of a terrorist-fraud attack 
in our case study. 

Firstly, note that in the TF model, Attacker and Prover need to share private information (Prover's response-vector) without Verifier immediately receiving it.
As argued in the introduction, the setting of \CGS\ offers a more elegant framework for 
sharing private information, even though any \CGS\ can be encoded as an interpreted system \cite{LomuscioQuRaimondi15},
as shown in \cite{BBDM19}, at the expense of doubling the number of variables.

Secondly, note that the existence of a TF attack utilises a non-singleton coalition operator.
In the presence of a memoryful interpretation of ATL with imperfect information -- which is consistent with the 
Dolev-Yao hypothesis of attacker capabilities -- abd in the absence of Section~\ref{decproof} the only possible approach 
would be to rely on the decidability results for ATL with imperfect information with the state-based (or memoryless) 
interpretation which are implemented in the MCMAS tool,
plus the fact that the model has a finite number of runs, all eventually looping in a final state.
To take this approach, one would need to explicitly encode agent memory in the model, that is,
for each variable $v$ in the model, create $N$ copies of it, where $N$ is the length of the longest run in the system (before ending in its looping state). 
By observing that the TF model is $A$-cast, we are then able to apply the model-checking algorithm provided by Theorem 1.
Certainly, this algorithm builds a memoryful strategy for the coalition formed by Attacker and Prover, 
but the memory needed by these two agents (and encoded in their winning strategy)
might be much smaller than what would be produced by explicit encoding of each agent memory in the model.

Finally, we note that expressing collusion requires a feature which is present in $ATL$:
nesting of coalition operators. This feature is not available in either Distributed Synthesis or Dynamic Epistemic Logic.
Also note that our TF-encoding formula is meaningful in the subjective interpretation of $ATL$,
due to the fact that the winning strategy for the coalition composed of Attacker and Prover
has to be uniformly applicable to all initial states, independently of the 
initialisation of the bitstring $a_1 || a_2$ which is not known initially by the Attacker.

%the  encodings above are in the  specific class of $A$-cast \CGS\  and subclass of ATL formulae introduced in Section~\ref{decproof}.

%(for
%which we proved decidability result of model checking $ATL_{iR}$) 

%\ioana{I will clarify here that ``meaningful in our subjective interpretation
%of $ATL$'' is w.r.t. the attacker}
% The attacker-model requires that we use perfect-recall semantics in
% deciding the satisfaction of this formula.
%By allowing the prover to be either close or far in each of the $n-1$
%sessions in which he does not collude (i.e., by not imposing that the
%prover participates in each execution),
% The attacker can interleave executions as he pleases, produce
%arbitrary messages, delay their sending to the verifier,
% etc. Intuitively, this causes $\CGS$'s paths to not necessarily
% finish/stutter in one state corresponding to some final
% protocol-stage. That is, we cannot safely bound the recall, if we
%wish to capture this faithful attacker-model.  
%So, w.r.t.~the formula
% we are to check, the attacker agent $\mathcal{A}t$ will need to consider his
% observations ``from scratch'', in a perfect recall sense, when
% executing his strategy for the subgoal $G\,
% \mathsf{cannot\_authenticate\_without\_help}$.  The system also
% operates under imperfect information.
%

%\section{Formula-based Model Reduction}
%\label{sec:red}
%\input{redvcgs}

\section{Related Work and Conclusions}
\label{sec:relwork}
% !TEX root = cav20.tex
%
%\input{related_work.tex}
%
The work here presented is related to contributions in distributed and multi-agent systems, 
including
propositional control, reactive modules, broadcast, dynamic epistemic
logic. However, it differs from the state of the art in numerous, subtle
aspects.

The \textbf{coalition logic of propositional control} (CL-PC)
was introduced in
\cite{CLPC2015}
%introduced CL-PC
as a logic to reason about the capabilities of agents in multi-agent
systems. It has since been investigated in relation with the transfer
of control \cite{HoekWW10}, the dynamic logic of propositional
assignments \cite{GrossiLS15}, non-exclusive
control \cite{Gerbrandy06}.
%later been extended with transfer of
%control \cite{HoekWW10}.  The relationship between CL-PC and
%the \emph{dynamic logic of propositional assignments} has been
%explored in \cite{GrossiLS15}. In~\cite{Gerbrandy06}
%Gerbrandy has been 
%there was the first study of CL-PC without exclusive control.  
%These investigations were further pursued
Further, in
\cite{BelardinelliGHLLNP17} it is proved that shared control
can be simulated by means of exclusive control.
%a result similar in
%spirit to Theorem~\ref{th1}.
However, the present contribution
differs from most of the works in this line in two key
aspects. Firstly, the references above deal mainly
with \emph{coalition logic}, which is the ``next'' $\naww{A} X$
fragment of $ATL$. Secondly, they all assume perfect information on
the agents in the system, while here we analyse the arguably more
challenging case of imperfect information.

\textbf{Visibility atoms} have been used to model
explicitly individual knowledge \cite{GLNP17}, even though only
in \cite{BBDM19} were used in the specific version considered in this
paper.
%In \cite{GLNP17} the authors introduce \emph{influence games} played
%by agents who try to influence the opinion other agents have on a
%given set of issues. An atom $vis(i, p)$ expresses the fact that
%``agent $i$ uses her influence power over issue $p$''.
A notion of (propositional) visibility is analysed in
\cite{Maffre16} and the works cited therein.  In particular, similarly
to what we do here, the indistinguishability relation between
epistemic states is defined in terms
%of identity
of the truth values of observed atoms. Still, a key difference
with \cite{Maffre16} and related works is that they introduce modal
operators for visibility as well. As a result, their language is
%arguably more expressive as regards visibility,
richer as it can express introspection for instance. On the other
hand, logics for strategies are not considered in this line of
research, while they are the focus of this paper.

Guarded commands have appeared in the framework of {\bf Reactive
Modules} \cite{AlurH99b}, which \CGS\ are reminiscent of. Model
checking $ATL$ on reactive modules has been investigated
in \cite{HoekLomuscioWooldridge06}, but in a perfect information
setting only.  More recently, imperfect information in reactive
modules has been explored in \cite{GutierrezPW16,GutierrezHW17},
however mainly in relation with the verification of game-theoretic
equilibria,
%Here we differ from the state of the art on reactive
%modules as
while here we study the verification of full $ATL$ under imperfect
information.

The semantics that we use also compares to the
event models of {\bf Dynamic Epistemic Logic}
%and Epistemic Planning
\cite{hvdetal.del:2007}.  The underlying philosophy of event models is
to provide a flexible formalism for representing one-step system
evolutions in aspects regarding both the way the system's state and
information available to each agent change. However, event models do
not focus on the local implementation of actions in the way iCGS
do, and therefore they only accommodate a limited type of ``strategic''
reasoning.
%Also, they allow for a more refined specification of the
%way actions influence the information update for each agent, by
%introducing the indistinguishability relations on events.
We uphold that the combination of agent-based action specifications
(commands) and visibility atoms allows us to reason
about both agent strategies and information updates.
%More formally,
%assuming that some multi-agent variant of event models can be defined,
%in the sense of a ``parallel composition'' of event models for each
%agent, any instance of a model-checking problem defined by a Kripke
%structure, an event model and an $ATL^*$ formula can be translated
%into a CGS model (and hence into a \CGS{} model too), in which the
%epistemic formulas in the preconditions constraining event application
%could be transfered to the $ATL^*$ formula (which would require then
%the addition of knowledge operators, or a subjective interpretation of
%the coalition operators).

Encoding {\bf security properties} in coalition-expressing logics has
 very seldom been done. Using $ATL$ without nested modalities,
 properties pertaining to contract-signing \cite{Chadha:2006}, as well
 as
%Similarly, ATL formulas without nested modalities were used to
%illustrate
receipt-freeness and coercion-resistance in
 e-voting~\cite{TabatabaeiJR16,BelardinelliCDJ17,Selene18} have been
 encoded.  By contrast, in terms of the property expressed here
 (i.e., terrorist-fraud attacks/resistance in identification
 protocols), the nesting of strategic abilities is required in the
 specification.
% indeed, the cryptologic formalisation of a terrorist-fraud attack
% equates to the following: in one protocol-run, there exists a
% strategy of a coalition including the attacker such that no matter
% what strategy the said attacker on his own follows in a future
% protocol-run, the latter cannot pass the protocol in this subsequent
% attempt.  Yet, this common acceptation of terrorist-fraud
% attacks/resistance is --for the first-time, herein-- expressed
% faithfully in a formal specification that is amenable to automatic
% verification. 
Also, most of these prior works
%including~\cite{TabatabaeiJR16,BelardinelliCDJ17,Selene18}
 do not operate
 in a perfect recall semantics, and the coalitions expressed do not
 necessarily pose on private-information sharing, as do ours in the
 collusion setting.
 
{\bf Symbolic security verification of distance bounding (DB)} as opposed to computational analysis~\cite{Blanchet2012} has
 emerged only in the last two years~\cite{Mauw18,Debant18,AD:2019}. As the most recent line of this type~\cite{AD:2019} shows, 
 TF attacks can  be
expressed therein only as a simple approximation by imposing sequence of events. That is because in the process-algebraic~\cite{Blanchet01} or rewriting-based approaches~\cite{tamarin} used therein, collusions cannot be expressed  either the system-modelling level or at the property-encoding level.  
Nonetheless,
%since they do actual mechanised verification,
they do model security more faithfully than we do in Section~\ref{example}, but this is primarily due to the fact that herein we present our model in a pen-and-paper version.
As such, we believe our model to be the first formalism to faithfully lend itself to the analysis of terrorist frauds in secure system.
%We note that we could do that too for our model
%herein. But, our aim herein was to exhibit on paper that we can model
%and verify collusions and terrorist-frauds in an expressive and
%natural way (and, for comprehensibility, we left out some of DB
%details from our presentation).

% Finally, the restriction to {\bf Broadcast} as mode of communication
% to retain decidability of the model checking problem has been recently
% explored in \cite{BLMR17,BelardinelliLMR17b,BelardinelliLMR18}.
% However, a key different here is that our notion of $A$-cast is
% strictly more general as it imposes restrictions only on the
% communication mode for the agents in coalition $A$.
% Further, our work differs
% also w.r.t.~the broadcast systems in
% \cite{DBLP:conf/lics/KupfermanV01,vanderMeyden2005},
% where the synthesis problem is analysed in relation with
% the grand coalition only.
% More generally, other restrictions on iCGS have been investigated
% recently, notably the hierarchical systems
% in \cite{BerthonMM17,BerthonMMRV17b}, which are nonetheless orthogonal
% to the present setting. In particular, notice that no hierachy is
% assumed on the observations of agents in \CGS, nor anything similar
% appears in standard TF attacks.
%

Finally, the restriction to {\bf broadcast} as mode of communication
to retain decidability of the model checking problem has been 
explored in \cite{BLMR17,BelardinelliLMR17b,DBLP:conf/lics/KupfermanV01,vanderMeyden2005}.
We have compared these approaches to ours in the beginning of Section 4.
More generally, other restrictions on iCGS have been investigated
recently, notably the hierarchical systems
in \cite{BerthonMM17,BerthonMMRV17b}, which are nonetheless orthogonal
to the present setting. In particular, notice that no hierachy is
assumed on the observations of agents in \CGS, nor anything similar
appears in standard TF attacks.

As \textbf{future work}, we plan to implement the model-checking algorithm in the MCMAS tool \cite{LomuscioQuRaimondi15}.
To date, this tool does not support communication between agents by means of shared variables, the only communication mechanism being synchronisation on joint actions. 
Therefore, even though our case study can be modelled and verified in MCMAS due to the fact that the model contains a finite number of lasso-type runs, 
the need to encode shared variables as synchronisation on joint actions would increase even more the complexity of the models, fact which, combined 
with the necessity to encode agent memory into the model in order to cope with the  ``memoryless'' semantics of ATL,  would have a non-negligible impact on the performances of the tool.
Therefore, our future research plans imply (1) enriching the syntax of the ISPL to allow for shared variables 
\emph{and} dynamic visibility of state variables through the addition of \emph{vis} statements, and (2) 
testing a model for the $A$-cast property, designing and implementing a symbolic version the model-checking algorithm in Section 5. 
Note however that ATL with imperfect information does not allow for fixpoint expansions of all coalition modalities, as it is the case with ATL with perfect information.

% As future work, we aim at automating the verification of security
% properties. This is non-trivial as security-orientated methods like~\cite{AD:2019} do not support strategies, whereas strategy-verification tool,
% like MCMAS \cite{LomuscioQuRaimondi15} do not support naturally the modelling of cryptographic primitives. 
% More importantly, the latter methods also do not naturally support the private-information sharing that render 
% the encoding of collusive attacks natural in the case of \CGS. Whilst such private-information share could 
% sometimes be artificially massaged into the MAS supported by tools like MCMAS, such a translation cannot be generalised 
% easily or systematically, and it  always implies an increase in the model-size. 
% Given these, the future work of mechanising \CGS\ verification, especially for security, is both non-trivial and exciting. 
%

%We also plan to study the gap between decidable and undecidable
%classes of \CGS\ w.r.t.~model checking $ATL$ under
%imperfect information and perfect recall.

%In particular, we aim at generalizing Theorem \ref{thm-A-cast} to
%full ATL.

%\newpage

%\section{Conclusions} \label{conc}
%\input{Concl}

%\newpage

%{\tiny
\bibliographystyle{splncs04}
\bibliography{bibliography}

\end{document}